\begin{document}
\author{Miko{\l}aj Boja\'nczyk}
\title{Separator logic and star-free expressions for graphs}
\maketitle
\begin{abstract}
    We describe two formalisms for defining graph languages, and prove that they are equivalent: 
    \begin{enumerate}
        \item \emph{Separator  logic.} (Introduced independently in by Schrader, Siebertz and Vigny in~\cite{schrader2021firstorder}.) This is first-order logic on graphs which, apart from the binary relation ``vertices $x$ and $y$ are connected by an edge'', has for every  $n \in \set{0,1,\ldots}$ a relation of arity $n+2$ which says that  ``vertex $x$ can be connected to vertex $y$ by a path that avoids vertices $z_1,\ldots,z_n$''.
        \item \emph{Star-free graph expressions.} These are expressions that  describe graphs with distinguished vertices called ports, and  which are  built from finite languages via Boolean combinations and  operations on graphs with ports used to construct tree decompositions.
    \end{enumerate}
    Furthermore, we prove a variant of \schutz's theorem (about star-free languages being those recognized by aperiodic monoids) for  graphs of bounded pathwidth. A corollary is that, given $k$ and  a graph language represented by an \mso formula, one can decide if the language can be defined in either of two above (equivalent) formalisms on graphs of pathwidth at most $k$.
\end{abstract}

\section{Introduction}
In this introduction, we argue that although first-order definable languages of graphs, words and trees are intensively studied, there is a mismatch between the considered variants of first-order logic. For graphs, a local version that uses the edge relation only is typically studied. On the other han,d  for words and trees one typically studies a variant that has order and not just successor. A consequence of this mismatch is that if we view a word as a special case of a graph, then the first-order definable word languages of words will not translate to first-order definable languages of graphs. To overcome this mismatch, we propose a stronger variant of first-order logic for graphs, which can be seen as the natural graph generalisation of first-order logic with an order on positions. 

Let us describe this situation in more detail, starting with first-order logic  for words and trees. 
In the field of logic and automata, when  first-order logic is used to define a property  of words, the vocabulary usually contains the order relation $x \le y$, and not just the successor relation $x + 1 =y$.  The difference is unimportant for  monadic second-order logic \mso, but important for first-order logic, since order  cannot be defined using first-order logic in terms of successor (successor can be defined in terms of order).   Word languages that can be defined  first-order logic with order have been intensively studied, starting in the 1960's,  and are known to have many equivalent descriptions, which are summarized in the following diagram
\[
\begin{tikzcd}
 \txt{aperiodic monoids}
 \ar[d,<->,"\text{\schutz~\cite[Section 1]{Schutzenberger65}}"]
 \\ 
    \txt{star-free languages}
    \ar[d,<->,"\text{Mc Naughton and Papert~\cite[Theorem 10.5]{McNaughtonPapert71}}"]
\\
\txt{first-order logic with $x \le y$}
\ar[d,<->,"\text{Kamp~\cite[Theorem 1]{Kamp68}}"]
\\
\txt{linear temporal logic} 
\end{tikzcd}
\]
Also for trees, first-order logic is usually studied together with the descendant ordering, and not just successor (we use the name ``child'' instead of successor  when talking about trees).  Here, the appropriate temporal logic is CTL*~\cite[Main Theorem]{haferThomas1987}, and there is also a corresponding notion of  star-free languages~\cite[Section 4]{bojanczykForestExpressions2007}. This gives tree counterparts for three of the four descriptions for word languages in the diagram above. (The missing counterpart is ``aperiodic monoids''. Finding an algebraic characterization of first-order definable tree languages remains an major open problem~\cite[Problem 3]{bojanczyk2015automata}.)  Generally speaking, in this line of research the focus is on understanding the expressive power of the logic, ideally by giving an algorithm which inputs a regular language and decides if it can be defined in the logic. Logics for words and trees that have only the successor relation have also been studied,  even if they might seem slightly less fundamental than the variants with order, and there exist  algebraic and decidable characterizations, see~\cite[p.252]{BrzozowskiSimon73} for the word case and ~\cite[Theorem 1]{benediktSegoufin2009} for the tree case.

A different attitude is prevalent for graphs. Here,  the usual notion of first-order logic uses the edge relation only, and has no predicates for reachability. As mentioned before, this difference is unimportant for monadic second-order logic. In first-order logic, however, reachability cannot be defined in terms of the edge relation. In the study of first-order logic on graphs, the focus is  on  finding efficient algorithms for model checking, with a famous result being that every sentence of first-order logic can be evaluated in almost linear time on every class of graphs that is nowhere dense~\cite[Theorem 1.1]{groheKreuzterSiebertz2017ACM}. From a technical point of view, reasoning about first-order logic on graphs  (with the edge relation only) usually relies on Gaifman locality, while reasoning about first-order logic on words and trees (with order) usually  relies on compositionality.

As we can see from the above discussion, the traditional study of first-order logic has considered  three cells in the following table:
\begin{center}
    \begin{tabular}{r|cc}
         & neighbor & reachability \\
         \hline
         words and trees &  discussed & discussed \\
         graphs & discussed & \\
    \end{tabular}
\end{center}
The purpose of this paper is to fill the missing cell, by proposing a variant of first-oder logic on graphs with reachability, which we call separator logic. We show that the class of languages definable in this logic is reasonably robust, by presenting an equivalent notion of star-free expressions. The notions are designed so that if we view words and trees as a special case of graphs, then we recover the previously studied classes of first-order definable languages with order. Finally, we show an algebraic characterization, in the style of \schutz's aperiodic monoids, of languages definable in separator logic for graphs of bounded pathwidth. An extension of this characterization from bounded  pathwidth to bounded treewidth seems to be beyond the reach of current methods, since it would require an algebraic characterization of first-order logic with descendant on trees.

\paragraph*{Acknowledgements.} I would like to thank Nicole Schrader, Sebastian Siebertz and Alexandre Vigny  for interesting discussions, which happened after we learned that we had been independently studying the same logic. Also, I would also like to thank Colin Geniet, Micha{\l} Pilipczuk and Szymon Toru\'nczyk for their helpful remarks. Finally, I would like to acknowledge the financial support of the ERC Consolidator Grant \emph{Lipa}, grant agreement 683080.

\section{Separator logic and star-free languages}
In this section, we describe the two formalisms used in this paper, namely separator logic and star-free languages of graphs, and we show that they are equivalent. Graphs are finite and undirected.  

\subsection{Separator logic}
\label{sec:separator-logic}
We begin our discussion with the logic, which is based on adding an infinite family of relations, apart from the edge relation, so that one can talk about separators in a graph. More formally,  define its \emph{separator model} of a graph  to be the relational structure where the universe is the vertices, and which is equipped with the following relations:
\begin{align*}
\myunderbrace{E(x,y)}{there is an edge from\\
\scriptsize  vertex $x$ to vertex $y$} 
\hspace{3cm} 
\myunderbrace{S_n(x,y,z_1,\ldots,z_n)}{for every $n \in \set{0,1,\ldots}$ there is\\
\scriptsize a relation $S_n$ of arity $n+2$ which \\
\scriptsize says that every path from $x$ to $y$ must  \\
\scriptsize use some vertex from $\set{z_1,\ldots,z_n}$}.
\end{align*}
The relations $S_n$ are called \emph{separator relations}.
There are infinitely  many separator relations in the model, and they have  unbounded arity, but of course every formula uses  finitely many relations. By convention, the relation $S_0(x,y)$ says that vertices $x$ and $y$ are in different connected components of the graph.
We use the name \emph{separator logic} for first-order logic using the separator model. 

\begin{example}\label{ex:connected-and-cycle}
Example, the following formula says that a graph is disconnected:
\begin{align*}
\myunderbrace{\exists x \ \exists y\ S_0(x,y)}{there is a separator of size $0$\\
\scriptsize between some two vertices}.
\end{align*}
Here is another sentence, which says that the graph contains a cycle:
\begin{align*}
\myunderbrace{\exists x \ \exists y\  \forall z\ \neg S_1(x,y,z) }{
    there are two vertices that cannot \\
    \scriptsize be separated by one vertex
}.
\end{align*}
By saying that a graph is connected and has no cycles, we can define trees in separator logic.
\end{example}
In this paper, we are most interested in the expressive power of separator logic. An alternative research direction would be to search for efficient algorithms for model checking of  separator logic on restricted graph classes; this alternative is pursued in~\cite{schrader2021firstorder}. We do not study that research direction, beyond the following example, which was proposed by Micha{\l} Pilipczuk and Szymon Toru\'nczyk.

\begin{example}[Complexity of model checking] This example concerns the model checking problem for  nowhere dense graph classes; in this example we assume that the reader is familiar with nowhere dense graph classes and the complexity of their model checking problem. For first-order logic with the edge relation only, the model checking problem (with the parameter being the formula) is fixed parameter tractable for every nowhere dense class,  see~\cite[Theorem 1.1]{groheKreuzterSiebertz2017ACM}. In contrast,  the same model checking problem is not fixed parameter tractable for the class of all graphs, or more generally any graph class that is closed under subgraphs and not nowhere dense, subject to a standard assumption in the field of parameterized complexity, namely that  AW[*] is different from FPT.
    
    In this example, we show a graph class  that is nowhere dense (in fact, it has the stronger property of bounded expansion),  such that  model checking of separator logic for this graph class is as hard as model checking of the usual first-order logic (with the edge relation only) over the class of all graphs. Therefore,  the model checking problem for separator logic is unlikely to be  fixed parameter tractable over this class.  In other words, bounded expansion alone is not sufficient for tractability of  the model checking problem for separator logic. 

    The idea is to simulate edges using long   paths, thus making a graph sparse,  and yet keeping enough information about the  original graph so that it can be recovered using separator logic. For a graph, define its \emph{subdivision} to be the result of subdividing each edge with  $n$ fresh vertices, where $n$ is the number of vertices in the original graph, as explained in a the following picture:
    \mypic{17}
    For simplicity of presentation, the subdivision is seen as vertex coloured graph, with the vertices of the original graph being black, and the subdividing vertices being red. To handle vertex coloured graphs, separator logic is extended with unary predicates for testing if a vertex has a given colour.   The vertex colours could be eliminated by using suitable gadgets. 
    Define 
    $\Cc$ to be the subdivisions -- in the sense described above -- of all graphs. 
     We argue below that $\Cc$ has bounded expansion, and model checking separator logic on $\Cc$ is at least as 
     hard as model checking first-order logic with the edge relation only on the class of all graphs. 
    
     \begin{enumerate}
         \item We first prove that $\Cc$  has bounded expansion. We assume that the reader is familiar with $r$-minors and bounded expansion.  For $r \in \set{0,1,\ldots}$, define $\Cc_r$ to be the class of depth-$r$ minors of graphs from $\Cc$. To prove that $\Cc$ has bounded expansion, we need to show that for every choice of $r$, graphs from $\Cc_r$ have a number of edges that is at most linear in the number of vertices. To see this, one observes that with finitely many exceptions,  every graph from $\Cc_r$ has the property that every edge is adjacent to at least one vertex of degree at most two, and the latter property implies that the number of edges is at most twice the number of vertices.
         \item We observe that a graph can be recovered from its subdivision using separator logic. The vertices  are the black vertices, and the edge relation corresponds to ``can be connected by a path that uses only red vertices''. To check if two black vertices $v$ and $w$ are connected by a red path, we check if removing $v$ and $w$ gives a graph that has a nonempty purely red connected component; this can be expressed in separator logic. 
     \end{enumerate}
     The hardness result from this above example uses only  the separator predicate $S_2$. This is optimal, since first-order  logic with the edge relation and  the separator predicates $S_0$ and $S_1$  is no harder than first-order logic with edges only, as far as the model checking problem is concerned.   For the  predicate $S_0$ alone this is easy to see,  since it is enough to compute the first-order theories of the connected components of a graph. A similar, if more complicated, argument can be made for $S_1$, by computing the first-order theories in the tree of 2-connected components. 
    \end{example}

\begin{example}[Words as graphs]
    In this example, we show that if words are viewed as graphs, then separator logic has the same expressive power as the usual notion of first-order logic for words that has the position order. Like in the previous example, to simplify notation  we use vertex coloured graphs.  For a word $w \in \Sigma^*$, define the \emph{path graph of $w$} to be the vertex coloured graph, with the colours being $\Sigma$ plus an extra black colour, that is  described in the following picture: 
    \mypic{18}
    The special black vertex on the left is used to orient the word. One could potentially avoid it  by using directed edges; however it is far from clear what is the right notion of separator logic for directed graphs. The left-to-right ordering on vertices of the path graph (with the special black vertex being the leftmost one) can be easily defined in separator logic: a non-black vertex $u$ is to the left of a non-black vertex $w$ if removing $u$ separates $w$ from the black vertex.  Conversely, the separator predicates in a path graph can be defined in first-order logic based on the order. This proves that a word language $L \subseteq \Sigma^*$ is definable in first-order logic with the position order if and only if the graph language 
    \begin{align*}
    \set{ \text{path graph of $w$} : w \in L}
    \end{align*}
    is definable in separator logic.  A similar argument also works for trees. Note that for path graphs, and also for graphs that encode trees, we do not need all separator predicates, only $S_1$. 
\end{example}

\subsection{Star-free expressions for graphs}
\label{sec:star-free-for-graphs}
We now move to the second formalism considered in this paper, which is star-free expressions. These will be proved to have  the same expressive power as separator logic. 

The star-free expressions are  based on graphs with ports\footnote{Graphs with ports, and the operations on them, are based on Courcelle~\cite[Section 1.7]{courcelle1987}. Courcelle uses the word ``source'' instead of ``port''. We use the name ``treewidth operations'' because a class of graphs has  treewidth $k$ if and only if it can be generated using the operations starting  from graphs with at most edge, so that at most $k+1$ ports are used at any given moment. }.
Define a \emph{graph with $i$ ports} to be a graph with a tuple of $i$ distinguished vertices, called \emph{ports}. All ports must be pairwise different. Here is a picture of a graph with two ports:
\mypic{1}
Since graphs with ports are the only kind of graphs that use, we call them graphs from now on. The \emph{arity} of a graph is defined to be the number of ports.
We use the following operations on graphs with ports.

\begin{definition}[Treewidth operations]\label{def:operations-on-graphs}
    The set of  treewidth operations is the following (infinite) set of operations. 
    
    \begin{itemize}
        \item {\bf Fusion.} For every $k \in \set{0,1,\ldots}$ there is a  \emph{$k$-fusion} operation which  inputs two graphs of arity $k$, and outputs the graph of arity $k$ that is obtained by taking the disjoint union of the two input graphs, and then identifying, for every $i \in \set{1,\ldots,k}$, the $i$-th port of both input graphs, as explained in the following picture:
        \mypic{2}
        The two inputs to $k$-fusion might disagree on the subgraphs induced by the ports. As a result of fusion, the edges are accumulated: if ports $v$ and $w$ are connected by an edge in at least one of two input graphs, then they are connected by an edge in the output graph. 
        \item {\bf Forget.} For every $k \in \set{0,1,\ldots}$ there is a \emph{$k$-forget} operation which inputs a graph of arity $k+1$, and outputs the graph of arity $k$  that is obtained from the input graph by no longer distinguishing port $k+1$.
        \mypic{3}
        \item {\bf Add. } For every $k \in \set{0,1,\ldots}$ there is a \emph{$k$-add} operation that inputs a graph of arity $k$, and outputs the graph of arity $k+1$ that is obtained from the input graph by adding an extra isolated vertex, which becomes port $k+1$.
        \mypic{4}
                \item {\bf Permute.}  For every $k \in \set{0,1,\ldots}$ and every  permutation of $\set{1,\ldots,k}$, there is a unary operation on graphs of arity $k$, which reorders the list of ports according to the permutation.
    \end{itemize}

\end{definition}

Graphs together with the treewidth operations can be viewed as a multisorted algebra, with  the sorts being  $\set{0,1,\ldots}$ and corresponding to arities. 
Define a \emph{graph language} to be a set of graphs, all of which have the same arity. The arity of the language is defined to be the arity of some (equivalently, every) graph in the language. When we take the  complement of a graph language, we mean the complement with respect to all graphs of the given arity. (A corner case is the empty graph language: there is an empty language for every arity $k$. The complement of this graph language is  the set of all graphs of arity $k$.)
The treewidth operations can  be applied to graph languages in the natural way, e.g.~the $k$-fusion of two languages of arity $k$ consists of all $k$-fusions where the first input is from the first language and the second input is from the second language. 
We now define the main object of this note, which is star-free languages of graphs. 

\begin{definition}[Star-free graph language]\label{def:star-free}
    The class of star-free graph languages is the least class of graph languages that contains all finite languages, and is closed under Boolean operations (including complementation) and the treewidth operations from Definition~\ref{def:operations-on-graphs}\footnote{An alternative presentation of the same idea would involve monads. We could take the hypergraph monad from~\cite[Section 3]{bojanczykTwoMonadsGraphs2018}, which is a monad that corresponds to treewidth, and define a star-free language to be any language that can be constructed from finite ones using Boolean combinations and term operations where each variable appears at most once. This alternative presentation indicates that there is a notion of star-free language in every monad, at least as long as one can express what it means for a variable to occur at most once in a term. One could study this notion in the vertex replacement monad for clique-width~\cite[Section 5]{bojanczykTwoMonadsGraphs2018}, such a study is mentioned as future work at the end of this paper.}.
\end{definition}

In the above definition, the induction basis for the constructors is the finite languages. An alternative would be to use only the empty sets and the  singleton sets for graphs with at most two vertices, since the remaining graphs can be constructed from such graphs using the treewidth operations.
We finish this section with some examples of star-free graph languages.
\begin{example}[Connected graphs]
    We describe a star-free expression that defines the set of graphs of arity zero which are connected. We assume that, by definition, every graph is nonempty, i.e.~it has at least one vertex (if we would allow the empty graph, then non-emptiness could be defined by taking a graph with one port, and forgetting that port). The set of all graphs of arity zero is therefore the complement of the empty set:
       \begin{align*}
        \neg \myunderbrace{(\emptyset : 0)}{the empty language of arity 0}.
       \end{align*}
    A graph of arity zero is disconnected if and only if it can be decomposed as the fusion of two nonempty graphs of arity zero.  Therefore, the set of connected graphs of arity zero is defined by the  following star-free expression (which uses $\oplus$ for fusion):
    \begin{align*}
    \neg \myunderbrace{(\neg (\emptyset : 0) \oplus \neg (\emptyset : 0))}{disconnected graphs}.
    \end{align*}
\end{example}
\begin{example} [Contains an given induced subgraph] \label{ex:induced-subgraph} Consider the language $L$ of graphs of arity $k$ which contain 
    an edge from port $i$ to port $j$. This language is star-free, because  it can be described as the fusion of the set of all graphs of arity $k$, with the singleton language that contains exactly one graph -- the graph of arity $k$ where all vertices are ports and there is only one edge, namely from port $i$ to port $j$. By taking a boolean combination of such graphs, we can use a star-free expression to specify which edges between ports are present, and which edges are not present. By forgetting all ports, we can define the language of graphs of arity zero which contain  some fixed graph $H$ as an induced subgraph. 
\end{example}

The above example demonstrates how the ports can be used to simulate variables of logic, and how forgetting can be used to simulate existential quantification. This simple idea will be used to prove that all languages definable in separator logic are star-free. 

\begin{example}[Cycles]
    A graph is a cycle if and only if it is connected, and every vertex has exactly two neighbors. This can be defined by a star-free expression.
\end{example}

\begin{example}[Trees] \label{ex:trees} To define the graphs which are trees, we use the same approach as in  Example~\ref{ex:connected-and-cycle}: a graph is a tree if and only if it is connected, and every two non-neighboring vertices can be separated by some other vertex. Connectivity was already treated in the examples above. For the separation property, we will  describe  a language that describes the separator predicate $S_1(x_1,x_2,x_3)$, namely the set   of graphs of arity 3 where port 3 separates ports 1 and 2. A graph belongs to  if and only if it can be obtained by taking the fusion  of two graphs of arity 3, such that in the first graph  port 2 is an isolated vertex, and in the second graph port 1 is an isolated vertex, as explained in the following picture:
    \mypic{21}
     If we forget port 3 in the graph language described above, then we get the language $K$ graphs of arity 2 where the two ports are separated by some vertex. 
     A connected graph of arity zero is a tree if for every way of selecting two non-neighboring ports, the resulting graph of arity 2 belongs to $K$.
\end{example}

\subsection{Equivalence of the two formalisms}
Having defined separator logic and star-free expressions, 
in this section we prove that they define the same graph languages. To define a graph language of arity $k \in \set{0,1,\ldots}$ in separator logic, we use formulas with free variables: for a formula $\varphi(x_1,\ldots,x_k)$ of separator logic with $k$ free variables, its language is defined to be the set of graphs of arity $k$ which satisfy the formula under the valuation which sets the $i$-th free variable to the $i$-th port. (This definition assumes an implicit order on the free variables.) Since the ports in a graph are required to be distinct, the graph language corresponding to a formula will only consider valuations where all free variables are different. For this reason, the language of the formula $x_1 = x_2$ is empty. We could have also considered a variant of graphs with ports which allow equalities on ports, with the same results, but we choose to work with the assumption that all ports are distinct.

\begin{theorem}\label{thm:star-free-is-logic}
    A graph language is star-free if and only if it is  definable in separator logic.
\end{theorem}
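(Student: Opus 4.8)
The plan is to prove the two inclusions separately, each by induction on the syntactic structure of the defining object. The forward direction (separator logic $\Rightarrow$ star-free) is the easier one, and the hints in Examples~\ref{ex:induced-subgraph} and~\ref{ex:trees} essentially give the recipe. First I would set up a translation that, given a formula $\varphi(x_1,\ldots,x_m)$ of separator logic, produces a star-free language of arity $m$ whose graphs are exactly those of arity $m$ satisfying $\varphi$ with the $i$-th port as $x_i$. Atomic formulas $E(x_i,x_j)$ and $x_i=x_j$ are handled as in Example~\ref{ex:induced-subgraph} (the latter giving $\emptyset$ since ports are distinct); the separator atoms $S_n(x,y,z_1,\ldots,z_n)$ are handled by generalising the construction in Example~\ref{ex:trees} — a graph of arity $n+2$ satisfies $S_n$ iff it is a $(n+2)$-fusion of two graphs in which, respectively, port $1$ and port $2$ lie in a connected component disjoint from $\{z_1,\ldots,z_n\}$, a condition built from ``connectedness'' (already shown star-free) plus fusion and Boolean operations. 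Boolean connectives translate to Boolean operations on languages. The only subtle point is quantification: $\exists x_{m+1}\,\varphi$ should become a $k$-forget applied to the language of $\varphi$, but one must manage the permute operations so that the forgotten port is in the last position, and one must take care of the corner case where the witness for $x_{m+1}$ coincides with an existing port — there the port-distinctness convention forces us to also include, via union, the languages obtained by substituting $x_{m+1}:=x_i$ into $\varphi$ for each $i\le m$ and using add/permute bookkeeping.

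For the converse (star-free $\Rightarrow$ separator logic) I would again induct, this time on the star-free expression. Finite languages are trivially definable: a single graph of arity $k$ is pinned down by a first-order sentence stating exactly which ports are adjacent, that there are no other vertices (``$\forall x\ \bigvee_i x=x_i$''), and the size; a finite union handles finite languages. Boolean operations on languages correspond directly to Boolean connectives in the logic. The work is in the treewidth operations. For each, I would express membership in the output language in terms of the structure of a putative decomposition. \textbf{Add} and \textbf{permute} are straightforward syntactic manipulations of the free-variable list. \textbf{Forget} is existential quantification, exactly as anticipated. The genuinely substantive case is \textbf{fusion}: given that $L_1,L_2$ are definable by $\varphi_1,\varphi_2$, I must write a formula saying ``this graph of arity $k$ can be cut along its $k$ ports into two pieces, one in $L_1$ and one in $L_2$''. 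The difficulty is that fusion accumulates edges between ports, so a vertex that is not a port lies entirely in one piece, and I need to guess (with a monadic-like device, which I do not have in first-order logic!) which non-port vertices go to which side.

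The main obstacle, therefore, is handling fusion without monadic quantification. The key observation that rescues this is that the separator predicates give us exactly the locality we need: in a fusion along ports $p_1,\ldots,p_k$, two non-port vertices $u,v$ are in the same side precisely when they are connected by a path avoiding all the ports — equivalently, $\neg S_k(u,v,p_1,\ldots,p_k)$ fails — so the bipartition of non-port vertices into the two sides is \emph{definable} from the port set, not something we need to quantify over. Concretely, membership in $L_1 \oplus_k L_2$ is expressed by a disjunction over all ways the two input graphs could induce adjacencies among the ports (finitely many possibilities), and for each such choice one asserts that the ``side-1 subgraph'' — the induced subgraph on the ports together with all non-port vertices reachable from a designated representative avoiding the ports — satisfies $\varphi_1$, and symmetrically for side 2. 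Since relativising a separator-logic formula to such a first-order-definable induced substructure is routine (every atom, including the separator atoms, relativises because a path inside the side-1 subgraph is just a path in the whole graph all of whose vertices are in side 1, a first-order-definable condition on paths that is itself expressible via the separator predicates with the port set added to the forbidden set), this closes the induction. I would present the relativisation lemma explicitly as the one technical point requiring care, and everything else as bookkeeping.
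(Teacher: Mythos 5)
Your forward direction is essentially the paper's argument (induction on the formula, ports as free variables, forget as existential quantification together with the substitution disjunction for the case where the witness coincides with an existing port), with one imprecision worth flagging: the separator atom must be handled at arity $k$, not just at arity $n+2$, and there the two-part fusion from Example~\ref{ex:trees} is no longer correct. If the graph has additional ports $j \notin I \cup \{s,t\}$, a decomposition $G = G_1 \oplus G_2$ with $t$ isolated in $G_1$ and $s$ isolated in $G_2$ can still admit a path $s \to j \to t$ avoiding $I$, the two edges coming from different sides of the fusion. This is exactly why the paper's condition (*) fuses along a partition of $\{1,\ldots,k\} - I$ into blocks, rather than into just two pieces.

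The genuine gap is in the fusion case of the converse. Your key claim --- that two non-port vertices lie on the same side of the fusion precisely when they are joined by a port-avoiding path --- is false in one direction. A port-avoiding path does force the same side, but the converse fails: each side $G_i$ may itself break into many connected components once the ports are removed, and a decomposition $G = G_1 \oplus G_2$ is free to distribute these components (the paper's \emph{prime factors}) between the two sides arbitrarily. Already for $k=0$, where fusion is disjoint union, a graph consisting of $n$ isolated vertices has many decompositions and no canonical one. Consequently your ``side-1 subgraph'' is not a definable induced substructure: it is a union of an unbounded number of components chosen from an unbounded set of candidates, and no fixed first-order formula can existentially guess that choice. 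This is precisely the obstacle the paper's proof is built to circumvent. It shows that $r$-equivalence is a congruence for fusion (Claim~\ref{claim:fusion-congruence}), so that the $r$-types form a finite aperiodic commutative monoid; hence the $r$-type of $G$ is determined by the multiset of $r$-types of its prime factors counted up to an aperiodicity threshold $m$ (the $m$-profile), and membership in $L_1 \oplus L_2$ reduces to a condition on the $2m$-profile, which \emph{is} definable in separator logic. Your relativisation idea has no counterpart to this counting step, and without it the fusion case does not go through. (Your relativisation lemma itself also needs more care than ``add the ports to the forbidden set'': a path inside the substructure may pass through a port and the ambient separator predicates do not by themselves confine what happens on the far side of that port; but this is secondary to the main problem.)
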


The two implications in the theorem are proved in Sections~\ref{sec:from-logic-to-expressions} and~\ref{sec:from-expressions-to-logic} below.

    \subsubsection{From separator logic to star-free expressions}
    \label{sec:from-logic-to-expressions}
    We begin with the easier implication, which is  from separator logic to star-free  expressions. Since the ports in star-free expressions can be used to represent free variables, we can use a simple induction on formula size: we show that  for every  formula with free variables contained in $\set{x_1,\ldots,x_k}$, the corresponding  language of arity $k$ is star-free. (The formula does not need to use all variables. This  happens for instance when taking a disjunction of two formulas that talk about different subsets of the variables.)  For boolean combinations there is nothing to do, since star-free languages have boolean combinations built in. Consider now an existential quantifier 
    \begin{align}
        \label{eq:existential-formula}
     \exists x_{k+1} \ \varphi(x_1,\ldots,x_k,x_{k+1}).
    \end{align}
    The rough idea is that existential quantification corresponds to forgetting the last port. There  is one slightly subtle point here: in the definition of the language of a formula, we only consider valuations where all variables represent distinct vertices, since the definition of a graph with ports requires all ports to be distinct.   Therefore, the language of the formula~\eqref{eq:existential-formula} consists of (a) the language of the formula $\varphi$ with the forget operator applied to it; plus (b) the language of the formula 
    \begin{align*}
           \bigvee_{i \in \set{1,\ldots,k}} \varphi(x_1,\ldots,x_k,x_i).
    \end{align*}
    The languages used in (a) and (b) are star-free thanks to the induction assumption, thus proving the induction step.  
   
    We are left with the induction basis, which corresponds to the edge and  separator predicates in the logic. Consider first the edge relation. The corresponding language is the graphs of arity $k$ where some two ports $i,j \in \set{1,\ldots,k}$ are connected by an edge; this language is star-free as we have shown in Example~\ref{ex:induced-subgraph}.  Consider now the the separator predicate. The corresponding graph language consists of graphs of arity $k$, such that some two ports $s,t \in \set{1,\ldots,k}$ are separated by a subset of ports $I \subseteq \set{1,\ldots,k}$. We need to justify that this language is star-free. Here, we use a similar idea as in Example~\ref{ex:trees}. More formally, it  is not hard to see that a graph belongs to this language if and only if 
         \begin{itemize}
             \item[(*)] there exists a partition of $\set{1,\ldots,k}-I$ into disjoint sets $I_1,\ldots,I_\ell$   such that the graph can be decomposed as a $k$-fusion
             \begin{align*}
             G_1 \oplus \cdots \oplus G_\ell
             \end{align*}
             so that $s$ and $t$ are in different blocks of the partition, and for every $i \in \set{1,\ldots,\ell}$ all of the ports from outside $I \cup I_i$ are isolated in  $G_i$. 
         \end{itemize}
     The idea behind (*) is that two ports are in the same block of the partition if they are equal or they can be connected by a path that avoids ports with indices in $I$. Finally, condition (*) is easily seen to be definable, because the partition can be chosen in finitely many ways, and because  ``port $i$ is an isolated vertex'' is a star-free language.

     \subsubsection{From star-free expressions to separator logic}
     \label{sec:from-expressions-to-logic}
     We now turn to the harder implication from Theorem~\ref{thm:star-free-is-logic}, namely that for every star-free expression, its language can be defined in separator logic. The proof is by induction  on the size of the expression.  The interesting case is the $k$-fusion  operation. By induction assumption, we know that both fused languages  can be defined in separator logic.

    The difficulty is that the definition of fusion cannot be directly formalized in separator logic, since this would require saying that one can  partition the non-port vertices into two parts which induce graphs in the two fused languages.  On the face of it, this would require quantifying over sets of vertices, which is a feature that is not available in  separator logic.
    However, we can work around this difficulty by using  a  compositionality argument.
    The key idea for this proof is that the type of a graph (which is  the information about the graph with respect to separator logic of given quantifier rank) can be inferred from the types of its prime factors, which are connected components of the graphs after the ports have been removed. As we will also see, the types of the prime factors will also give us enough information to decide if a graph belongs to the fusion of two languages definable in separator logic. A more detailed argument is presented below. 

        We use the usual notion of quantifier rank. For a quantifier rank $r \in  \set{0,1,\ldots}$, we  say that two  graphs  are $r$-equivalent if they have the  same arity $k \in \set{0,1\ldots}$  and they belong to the same languages defined by formulas of separator logic that have  quantifier rank at most $r$. 

    \begin{claim}[Congruence]\label{claim:fusion-congruence}
        Let $ r, k \in \set{0,1,\ldots}$, and let $\equiv$  be $r$-equivalence on graphs of arity $k$. Then $\equiv$  is a congruence with respect to $k$-fusion, i.e.
        \begin{align*}
        \bigwedge_{i = 1,2} G_i \equiv G'_i \quad \Rightarrow \quad G_1 \oplus G_2  \equiv G'_1 \oplus G'_2
        \end{align*}
holds for all graphs $G_1,G_2,G'_1,G'_2$ of arity $k$.
        \end{claim}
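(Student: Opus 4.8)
The plan is an Ehrenfeucht--Fra\"iss\'e argument. Two graphs of arity $k$ are $r$-equivalent (related by $\equiv$) exactly when Duplicator wins the $r$-round Ehrenfeucht--Fra\"iss\'e game on their separator models, with port $i$ matched to port $i$ from the start. This is the usual Fra\"iss\'e characterisation; although the separator model carries infinitely many relations of unbounded arity, only finitely many of them matter for a fixed $k$ and $r$, since a formula of quantifier rank $r$ with free variables among $x_1,\dots,x_k$ names at most $k+r$ variables and $S_m$ depends only on the \emph{set} formed by its last $m$ arguments, so repeated arguments can be contracted. Fix Duplicator strategies $\sigma_1$ and $\sigma_2$ witnessing $G_1\equiv G_1'$ and $G_2\equiv G_2'$. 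In the game on $G_1\oplus G_2$ versus $G_1'\oplus G_2'$, every vertex is a shared port, a non-port vertex of $G_1$, or a non-port vertex of $G_2$; Duplicator answers a port move by the same port, and a non-port move in $G_i$ according to $\sigma_i$, regarding the matched ports as common opening moves of both sub-games. Over the $r$ rounds Spoiler splits his non-port moves between the two parts, so each $\sigma_i$ is consulted for at most $r$ rounds, which is harmless since $G_i\equiv G_i'$ supplies a winning strategy for $r$ rounds, hence for any smaller number of rounds.

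It remains to verify that a play consistent with this strategy ends in a partial isomorphism of the separator models of $G_1\oplus G_2$ and $G_1'\oplus G_2'$. For the edge relation this is immediate from the edge-accumulation rule of fusion together with the ports being matched in both sub-games. The substance is a reachability decomposition: given a set $W$ of vertices of $G_1\oplus G_2$, with restrictions $W_1, W_2$ to the two parts (sharing the ports in $W$), every path in $(G_1\oplus G_2)\setminus W$ breaks at the ports it visits into segments each lying in a single part; hence, forming the auxiliary graph $H$ on the surviving ports with an edge between $i$ and $j$ whenever ports $i$ and $j$ are connected in $G_1\setminus W_1$ or in $G_2\setminus W_2$, two vertices are connected in $(G_1\oplus G_2)\setminus W$ if and only if they are already connected inside one part, or each reaches a surviving port inside its own part and those two ports are connected in $H$. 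Consequently $S_n(u,v,\bar w)$ evaluated in $G_1\oplus G_2$ equals a fixed Boolean combination --- depending only on $k$, $n$ and which of the pieces are ports --- of statements $S_m(\dots)$ evaluated in $G_1$ and in $G_2$ whose parameters all lie among the pieces named in the corresponding sub-game; the same combination computes $S_n$ on the primed side, and each constituent statement is preserved by $\sigma_1$ or $\sigma_2$, so $S_n$ is preserved and Duplicator wins.

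The main obstacle is this reachability decomposition together with the arity bookkeeping it needs: one must check that the separator predicates of a fusion are genuinely Boolean combinations of separator predicates of the two factors whose parameters are confined to the named vertices, and that a sub-game which ran for $a$ rounds --- and therefore controls only relations of arity at most $k+a$ --- is never queried about a relation of larger arity (each generated statement about $G_i$ mentions only the $k$ ports and the at most $a$ non-port $G_i$-pieces). Combining the strategies, handling the edge relation, and disposing of the degenerate cases (a port lying in $W$, or $u=v$, or $u$ or $v$ occurring among $\bar w$) are routine.
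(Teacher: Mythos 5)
Your proposal is correct and follows essentially the same route as the paper's proof: a composition of Duplicator strategies in the Ehrenfeucht--Fra\"iss\'e game, with the key step being that paths in the fusion split at port vertices into segments each lying in a single factor, so the separator predicates of the fusion reduce to separator predicates of the factors evaluated on pebbled vertices. Your version spells out the arity bookkeeping and the auxiliary connectivity graph on surviving ports in more detail than the paper's sketch, but the underlying argument is the same.
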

    \begin{proof}
        A standard \ef pebble game argument. We assume that the reader is familiar with such arguments, and hence we only sketch it. For $i \in \set{1,2}$,  define the  \emph{$i$-th local game} to be the \ef  game corresponding to the equivalence of $G_i$ and $G'_i$, and define the  \emph{composite game} to be the game corresponding to the equivalence of $G_1 \oplus G_2$ and $G'_1 \oplus G'_2$. Using a standard composition of strategies, Duplicator uses a strategy in the composite game  which has the property that  whenever a position in the composite game is reached, then for every $i \in \set{1,2}$, if we keep only the vertices and pebbles from $G_i$ and $G'_i$, then we get a position in the $i$-th local game that is consistent with Duplicator's winning strategy. We will now prove that this composite strategy is winning  in the composite game. To prove this, we need to show that if all $r$ rounds have been played, thus reaching two pebblings
        \begin{align}\label{eq:pebbling}
            \myunderbrace{x_1,\ldots,x_r}{a pebbling in $G_1 \oplus G_2$} 
            \qquad\text{and}  \qquad
            \myunderbrace{x'_1,\ldots,x'_r,}{a pebbling in $G'_1 \oplus G'_2$} 
        \end{align}
        then the same quantifier-free formulas are satisfied on both sides. The interesting case is that of a quantifier-free formula which is  a separator relation. To show that the same separator relations are satisfied by the two pebblings in~\eqref{eq:pebbling}, for every set $I \subseteq \set{1,\ldots,r}$ and $i, j \in \set{1,\ldots,r}$ we need to show that  if in one of the pebblings   the $i$-th pebble can be connected with the $j$-th pebble by a path that avoids pebbles from  $I$, then the same is true  in the other pebbling. Suppose that such a connecting path exists in one of the pebblings, say the left one.  To show that a similar connecting path exists in the right pebbling, we decompose the path on the left side  into segments, such that in each segment the ports are not used except for the source and target vertices. By definition of $k$-fusion, each such segment is necessarily contained in either $G_1$ or $G_2$, and therefore a corresponding segment can be found in $G'_1$ or $G'_2$ by the assumption on the fused graphs being equivalent. The corresponding segments can be put together to form a path in $G'_1 \oplus G'_2$ which connects the $i$-th pebble to the $j$-th without passing through pebbles from $I$. 
    \end{proof}

We will now use the congruence property described above to show that  languages definable in separator logic are closed under $k$-fusion. Suppose that $L_1$ and $L_2$ are languages of arity $k$ that are  definable in separator logic. 
     Choose the quantifier rank $r \in \set{1,2,\ldots}$ so that  both languages $L_1$ and $L_2$ are defined by formulas of separator logic which have  quantifier rank at most $r$. Define the  \emph{$r$-type} of a graph of arity $k$ to be its equivalence class with respect to $r$-equivalence.  By choice of  $r$,  for every $i \in \set{1,2}$, membership of a graph in $L_i$ depends only on its $r$-type.  
      
     A formula of separator which $k$ free variables and  quantifier rank $r$ will have $k+r$ variables, and therefore it can use only the separator predicates of arity at most $k+r$. In particular, up to logical equivalence there are finitely many equivalence classes, once the arity $k$ and the quantifier rank $r$ have been fixed.  This means that there are finitely many $r$-types of graphs of arity $k$.
      
     We can view graphs of arity $k$ as a commutative monoid, where the monoid operation is  $k$-fusion. By the congruence property from Claim~\ref{claim:fusion-congruence},  having the same $r$-type is a monoid congruence, and hence the $r$-types form a finite commutative  monoid. Also,  the monoid of $r$-types is aperiodic, in the sense that there is some threshold $m \in \set{0,1,2,\ldots}$ such that every monoid element $a$ satisfies
     \begin{align*}
     \myunderbrace{a \oplus \cdots \oplus a}{fusion of $m$\\ 
     \scriptsize copies of $a$} 
     \quad = \quad 
     \myunderbrace{a \oplus \cdots \oplus a}{fusion of $m+1$ \\ \scriptsize copies of $a$}.
     \end{align*}
In fact, aperiodicity holds not only for separator logic, but even for  the more expressive monadic second-order logic~\cite[Lemma 4.19]{bojanczykTwoMonadsGraphs2018}.

     For a graph $G$, define a  \emph{prime factor} of $G$ to be any graph $H$  that is obtained from $G$ by taking some non-port vertex $v$, and taking the subgraph of $G$ that is induced by the ports and those vertices which can be reached from $v$ by a path that does not visit ports.  Every graph is a fusion of its prime factors:
    \begin{align*}
    G = G_1 \oplus \cdots \oplus G_n.
    \end{align*}
    Lemma~\ref{claim:fusion-congruence} says that taking the $r$-type is a monoid homomorphism, from the monoid of graphs of arity $k$ equipped with fusion, to the monoid of $r$-types equipped with fusion.  It follows that for every graph of arity $k$, its $r$-type is uniquely determined by the following information: for every $r$-type $a$, what is the number of prime factors of $G$ that have $r$-type $a$, counted up to threshold the threshold $m$ from the definition of aperiodicity. We use the name \emph{$m$-profile} for this information. 
     The multiset of   prime factors of $G_1 \oplus G_2$ is the multiset union of the prime factors of $G_1$ and $G_2$, and therefore 
     \begin{align*}
     G \in L_1 \oplus L_2
     \end{align*}
      if and only if the multiset of prime factors in $G$ can partitioned into two multisets, the first multiset having  an $m$-profile that is consistent with $L_1$, and the second multiset having  an $m$-profile that is consistent with $L_2$. Whether or not this is possible can be deduced from the $2m$-profile of  the original graph. Finally, the $2m$-profile can be defined in separator logic.  This completes the proof that languages definable in separator logic are closed under fusion, and thus also the proof of Theorem~\ref{thm:star-free-is-logic}.

\section{Bounded pathwidth and aperiodicity}
\label{sec:pathwidth}
The second contribution of this paper is  an algebraic theory for star-free languages of graphs in the case of bounded pathwidth. Our goal is to give a version of \schutz's theorem about star-free languages and aperiodic monoids, see~\cite[Section 1]{Schutzenberger65}. \schutz's  theorem says that a word language is star-free if and only if it is recognized by a finite aperiodic monoid. We have already discussed aperiodic monoids in the previous section: a monoid is aperiodic if for every element $a$ there is some $m \in \set{1,2,\ldots}$ such that 
\begin{align*}
a^m = a^{m+1}.
\end{align*}
If the monoid is finite, as will be the case whenever we talk about aperiodic monoids, the number $m$ can be chosen independently of $a$.
Equivalently, a finite monoid is aperiodic if and only if it only contains trivial groups (a group contained in a monoid  is any subset of the monoid such that the monoid multiplication induces a group structure, with a group  identity that  is not necessarily  the same as the identity of  the monoid).

In this section, we prove a similar theorem for  graphs of bounded pathwidth, which characterizes the star-free graph languages (equivalently, graph languages  definable in separator logic) by a certain aperiodicity condition.   A corollary of our result is that there is an algorithm, which inputs a language $L$ of graphs of bounded pathwidth (e.g.~given by a formula of monadic second-order logic) and a pathwidth bound $k$, and decides whether or not there is a  star-free expression  (equivalently, a sentence of  separator logic) that coincides with $L$ on graphs of pathwidth at most $k$. The   usefulness of such an algorithm is perhaps debatable; but the  main point of the characterization is that it can be seen as further evidence that we have chosen the right  notion of star-free languages for graphs.

The  results of this section concern bounded pathwidth. 
Ideally, we would want to extend the results to more graphs, e.g.~for graphs of bounded treewidth. Such an extension seems to be  beyond the reach of our techniques. This is because  already for trees, i.e.~for graphs of treewidth 1,  finding an algebraic characterization of tree languages definable in first-order logic with descendant is a major open problem~\cite[Problem 3]{bojanczyk2015automata}.

     \paragraph*{Monoids.}
Our algebraic approach to graphs of bounded pathwidth is based on~\cite[Section 4.2.1]{bojanczykDefinabilityEqualsRecognizability2016a}: we view path decompositions as words over an alphabet that describes graph operations, and we recognize properties of path decompositions using homomorphisms  into finite monoids.  

We begin with a quick summary of monoids and their properties, some of which were already mentioned at the end of Section~\ref{sec:from-expressions-to-logic}. 
Recall that a \emph{monoid} is a set equipped with a binary associative operation, denoted by $a \cdot b$, together with a neutral element $1$ which satisfies $1 \cdot a = a = a \cdot 1$. A \emph{monoid homomorphism} is a function between two monoids that commutes with the monoid operation and preserves the neutral element. We say that a monoid homomorphism $\alpha : A \to B$ \emph{recognizes} a subset $L \subseteq A$ if elements of $A$ with equal value under $\alpha$ have equal membership status for $L$. We say that a monoid homomorphism $\alpha_1 : A \to B_1$ \emph{refines} a homomorphism $\alpha_2 : A \to B_2$ if equal values for $\alpha_1$ imply equal values for $\alpha_2$; this is the same as saying that $\alpha_2$ factors through $\alpha_1$ via some function of type $B_1 \to B_2$. (If $\alpha_1$ is surjective, then such a function, if it exists, is also a monoid homomorphism.) Every subset  $L$ of a monoid $A$ (we use the name \emph{language} for such subsets) is known to have a \emph{syntactic homomorphism}, i.e.~a homomorphism  which recognizes the language,  and which is refined by every other surjective  homomorphism that recognises the language.  One usually cares about languages contained in monoids that are finitely generated and free;  but in this paper will be interested in monoids that represent graphs of bounded pathwidth, and such monoids are not free.

\newcommand{\contexts}{\mathsf{Cont}}
\newcommand{\pw}{\mathsf{P}}
\newcommand{\mpw}{\mathsf M}

\subsection{Pathwidth and contexts}
Consider a graph (for the moment, without any ports). A \emph{path decomposition} of this graph   is defined to be a sequence of subsets of vertices, called \emph{bags},  such that: (1) every vertex appears in some bag and for   every  edge there is a bag that contains both endpoints of the edge; and  (2) for every vertex, the bags which contain this vertex form an interval in the sequence of bags  (i.e.~if a vertex appears  in two bags from the sequence, then it appears in all other bags between these two).  The width of a path decomposition is the maximal bag size, minus one. The \emph{pathwidth} of a graph is the minimal width of its path decompositions.

\begin{example}\label{ex:pathwidth-one}
    A graph has pathwidth 1 if and only if it can be obtained from a disjoint union of paths by adding dangling vertices, as in the following picture:
    \mypic{20}
    This property can be defined in separator logic. For higher pathwidth the situation is harder, and we believe that separator logic can  define pathwidth $k$ only for finitely many $k$.
\end{example}

For pathwidth, the appropriate notion of graphs with ports is  \emph{contexts}, which were called bi-interface graphs in~\cite[Definition 4.5]{bojanczykDefinabilityEqualsRecognizability2016a}. These  are like graphs with ports, except that there are two kinds of ports, called  left and right ports, as described in the following definition.

\begin{definition}[Context ] \label{def:contexts}  
    A \emph{context} of arity $k \in \set{0,1,\ldots}$  is defined to be a graph together with two 
     partial injective  maps
    \begin{align*}
     \text{left, right} : \set{1,\ldots,k} \to \text{vertices of the graph}.
    \end{align*}
    For $i \in \set{1,\ldots,k}$, the \emph{$i$-th left port} of the context is defined to be the image of $i$ under the left map; such a vertex is said to have \emph{left index $i$}.  Likewise we define the right ports. We require the following compatibility of left and right indices: if both the left and right indices of a vertex are defined, then these indices are equal.
\end{definition}


We write $u,v,w$ for contexts, because contexts play the role of words.
We allow the left or right ports in a context to be undefined, in contrast to the definition of graphs with ports which requires all ports to be defined. A graph without ports  can be viewed as a special case of a  context of any arity $k$, where all ports are undefined. 
Here is a picture of a context of arity 3, where the second right port is undefined, and the first left port is equal to the first right port:
\mypic{5}
The compatibility of indices in the definition of contexts is meant to disallow contexts such as the following permutation gadget
\mypic{13}

The idea behind contexts is that they  describe path decompositions, with the left ports being the contents of the leftmost bag, and the right ports being the contexts of the rightmost bag. 
Composition of contexts, which corresponds to concatenation of  path decompositions, is defined in the same way as fusion of graphs with ports, and is illustrated in the following picture:
\mypic{6}
Composition of contexts is associative, and hence the context of fixed  arity $k$ form  a monoid.  Pathwidth is extended to contexts in the natural way:  a context has pathwidth at most $k$ if the underlying graph has a path decomposition of width at most $k$, where the leftmost bag contains all the left ports and the rightmost bag contains all the right ports. The following lemma is a straightforward reformulation of the definition of pathwidth; with the generating context playing the role of bags in a path decomposition. 
\begin{lemma}\label{lem:generators} \cite[Lemma 4.6]{bojanczykDefinabilityEqualsRecognizability2016a} A context has pathwidth at most $k$ if and only if it  can be generated, in the monoid of contexts of arity $k$, by contexts which have at most $k+1$ vertices. 
\end{lemma}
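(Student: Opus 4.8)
The plan is to regard the lemma as a direct translation between two kinds of data: factorizations $u = u_1\cdot u_2\cdots u_n$ of a context $u$ into contexts $u_i$ with at most $k+1$ vertices, and path decompositions $B_1,\dots,B_n$ of the underlying graph of $u$ of width at most $k$ in which the first bag contains all left ports and the last bag contains all right ports. The translation sends a factor $u_i$ to the bag $B_i := V(u_i)$, where $V(u_i)$ is read as a set of vertices of $u$ through the identifications made by the compositions, and conversely sends a bag $B_i$ to the context induced by $u$ on $B_i$, with the left ports of the induced context taken to be $B_{i-1}\cap B_i$ and the right ports $B_i\cap B_{i+1}$.

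For the direction from a factorization to a path decomposition I would set $B_i := V(u_i)$ and check the defining conditions. Every vertex of $u$ lies in some factor, so the bags cover all vertices; every edge of $u$ is, by the definition of composition, already present in some factor $u_i$, hence has both endpoints in $B_i$; and a vertex belonging to two factors $u_i$ and $u_j$ with $i<j$ is carried through each intermediate factor as a port of one fixed index, so the bags containing it form an interval. Since $|B_i|=|V(u_i)|\le k+1$ and $V(u_1)$ contains the left ports of $u$ while $V(u_n)$ contains its right ports, this is a path decomposition of the required kind, which also establishes the easy implication of the lemma.

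For the converse I would start from such a path decomposition, first normalizing it — prepend a bag equal to the set of left ports, append a bag equal to the set of right ports, and remove every bag that equals its predecessor — so that the two end bags are exactly the port sets and every interface $B_i\cap B_{i+1}$ has at most $k$ vertices (a bag of $k+1$ vertices cannot be contained in a neighbouring bag, by the width bound). I would then define $u_i$ to be the context on $B_i$ with left ports $B_{i-1}\cap B_i$, right ports $B_i\cap B_{i+1}$ and the edges of $u$ lying inside $B_i$ (each edge assigned to any one bag containing it, which is harmless since composition accumulates edges), and verify that $u_1\cdots u_n = u$. The only point that is not bookkeeping is equipping each interface with an injective labelling by indices in $\{1,\dots,k\}$, used both as the right-port indexing of $u_i$ and the left-port indexing of $u_{i+1}$: on the first and last interfaces this labelling is forced by the port indices of $u$, and one has to verify that the path decomposition can be chosen so that these forced labels propagate consistently down the whole decomposition — intuitively, so that for each port index $i$ the baton is handed from $\mathrm{left}(i)$ to $\mathrm{right}(i)$ without the two vertices ever sharing a bag. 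Getting this compatibility of labels right is the part of the proof I expect to require the most care; once it is in place, the rest is a routine check that the assembled composition recomputes the vertex set, the edge set and the two port maps of $u$.
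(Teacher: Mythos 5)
Your forward direction (a factorization into $k$-generators yields a width-$k$ path decomposition by taking bags $B_i := V(u_i)$) is fine, and it is the only direction for which the translation really is bookkeeping. For the converse, the step you explicitly defer --- choosing the decomposition so that the forced port indices ``propagate consistently'' --- is not a routine verification to be supplied later: it is the entire content of the lemma, and under the definitions as literally stated in this paper it cannot be carried out. The obstruction is exactly the one you name. By the index-compatibility condition of Definition~\ref{def:contexts}, a vertex that survives from one factor to the next must keep a single index for its whole lifetime, and the lifetimes of the left and right ports of $u$ are anchored at the two ends to the indices prescribed by $u$. These forced labels can clash on an interior interface. Concretely, take $k=2$ and let $u$ have four distinct vertices $v_1,v_2,w_1,w_2$ with $\mathrm{left}(i)=v_i$, $\mathrm{right}(i)=w_i$, and the two crossing edges $v_1w_2$ and $v_2w_1$. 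The bags $\{v_1,v_2\},\{v_1,v_2,w_2\},\{v_2,w_2\},\{v_2,w_2,w_1\},\{w_1,w_2\}$ form a path decomposition of width $2$ whose first and last bags contain the left and right ports respectively, so $u \in \pw_2$ according to the paper's definition of pathwidth for contexts. Yet $u$ admits no factorization into contexts of arity $2$ with at most $3$ vertices: the two edges cannot lie in the same factor (that factor would need four distinct vertices), so say $v_1w_2$ lies in $u_i$ and $v_2w_1$ in $u_{i'}$ with $i<i'$; then $v_2$ is alive with index $2$ on every interface up to $i'-1 \ge i$, while $w_2$ is alive with index $2$ on every interface from $i$ onwards, so at interface $i$ two distinct vertices must both be the right port of index $2$ of $u_i$, contradicting injectivity of the port maps. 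The case $i'<i$ is symmetric with index $1$.

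So the difficulty is not that one must pick the path decomposition carefully --- for this $u$ no choice works --- but that the unlabelled notion of pathwidth for contexts given in this paper is strictly weaker than membership in the submonoid generated by the $k$-generators. (Note that the paper itself gives no proof here: it calls the lemma a straightforward reformulation of the definition and cites Lemma~4.6 of the definability-equals-recognizability paper, where the decomposition-like objects carry the interface labelling as part of their data.) To make your converse direction go through, the definition of a width-$k$ path decomposition of a context has to be strengthened so that it already records, for each interface, a consistent injective assignment of indices extending the one forced by the ports of $u$; with that labelled notion in hand, your translation between bags and factors does become the routine check you describe. As written, however, the proposal does not establish the hard direction, and the example above shows the gap is substantive rather than cosmetic.
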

We will use the name \emph{$k$-generators} for the  generating contexts in the above lemma, i.e.~contexts of arity $k$ with at most $k+1$ vertices. We write $\pw_k$ for the monoid of contexts of arity $k$ that have pathwidth at most $k$; the above lemma says that this monoid is generated by the finite set of $k$-generators.
\begin{remark}
    In our algebraic approach to bounded pathwidth, we allow only one operation on contexts, namely composition. In the terminology of series parallel graphs; context composition corresponds to serial composition of graphs. A  parallel composition of contexts can also be defined,  call it \emph{fusion of contexts}, as illustrated in the following picture:
    \mypic{10}
Unfortunately, we cannot add fusion to our algebras, because contexts of given pathwidth are not closed under fusion. For example, in the picture above, the two fused contexts have pathwidth 1, but their fusion has pathwidth 2.
In fact, adding fusion to the monoid  of pathwidth $k$  would allow us to generate all graphs of treewidth $k$, since we can view a graph with $k$ ports  as a context where only the left $k$ ports are defined, and the right ports are all undefined. Using fusion on such contexts would allow us to simulate arbitrary tree decompositions of width $k$. 
\end{remark}

\subsubsection{Recognizable languages of pathwidth $k$}
We will be interested in classifying recognizable languages of given pathwidth, i.e.~languages which are recognized by a monoid homomorphism from the monoid $\pw_k$ to a finite monoid.  We will want to know which of these languages can be defined by star-free expressions, or equivalently, using separator logic, in the sense that is described by the following lemma.

\begin{lemma}\label{lem:star-free-pathwidth}
    Let $L$ be a set of graphs without ports, which contains only graphs of pathwidth at most  $k \in \set{1,2,\ldots}$. The following conditions are equivalent:
    \begin{enumerate}
        \item there is a formula of separator logic that agrees with $L$ on graphs of pathwidth at most $k$;
        \item there is a star-free expression that agrees with $L$ on graphs of pathwidth at most $k$;
        \item \label{it:star-free-in-monoid} if we view $L$ as a set of contexts where all ports are undefined, then $L$ belongs to the least class of subsets of the monoid $\pw_k$ that contains all finite sets, and  is closed under the monoid operation (context composition), as well as Boolean combinations (with complementation relative to the monoid $\pw_k$).
    \end{enumerate}
\end{lemma}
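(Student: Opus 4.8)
The plan is to prove the cycle of implications $(2)\Rightarrow(1)\Rightarrow(3)\Rightarrow(2)$, using Theorem~\ref{thm:star-free-is-logic} to collapse the gap between the first two conditions. The implication $(2)\Rightarrow(1)$ is immediate from Theorem~\ref{thm:star-free-is-logic}: a star-free expression that agrees with $L$ on graphs of pathwidth at most $k$ has a separator-logic equivalent, which then also agrees with $L$ on that class. So the content is in relating the logical/expression-based conditions to recognizability inside $\pw_k$ together with closure under composition and Booleans.

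For $(1)\Rightarrow(3)$, I would first observe that a formula $\varphi$ of separator logic, interpreted on graphs (equivalently, on contexts with all ports undefined), induces a language of contexts; the point is to show that the preimage $\varphi^{-1}$ restricted to $\pw_k$ lies in the least class described in item~\ref{it:star-free-in-monoid}. The natural route is compositionality over path decompositions in the style of~\cite[Section 4.2.1]{bojanczykDefinabilityEqualsRecognizability2016a}: the separator-logic type of a context of arity $k$ and quantifier rank $r$ can be computed from the types of its factors under context composition, exactly as in Claim~\ref{claim:fusion-congruence} (the same segment-decomposition argument for separator predicates goes through, now with left/right ports playing the role of the $k$ fused ports). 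Hence $r$-equivalence on $\pw_k$ is a congruence for context composition with finitely many classes, so it is recognized by a finite monoid $\pw_k \to M$. It then remains to show that the basic pieces — languages of the form ``the $r$-type is $t$'' — are obtainable from finite sets by composition and Booleans inside $\pw_k$; here I would induct on $r$, using that each $k$-generator is a single context (a finite set), that composition of the class is available by hypothesis, and that Booleans let us isolate individual types. Actually it is cleaner to argue abstractly: any language recognized by $\pw_k \to M$ with $M$ finite and \emph{aperiodic} is a Boolean combination of products of preimages of singletons of generators — but $M$ need not be aperiodic a priori, so the honest statement is that the star-free (in $\pw_k$) languages are closed under the operations and contain the type-languages by the inductive construction of separator-logic formulas of rank $r$.

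For $(3)\Rightarrow(2)$, I would go the other direction: given $L$ in the least class inside $\pw_k$, build a star-free graph expression agreeing with $L$ on pathwidth-$\le k$ graphs. The idea is that context composition, left/right-port structure, and $k$-generators can all be simulated by the treewidth operations (fusion, forget, add, permute) on graphs with ports — a left port is a port, a right port is a port, composition is fusion followed by forgetting the glued ports, and each finite $k$-generator language is a finite graph language hence star-free. Formally I would translate each constructor in the inductive definition of the class from item~\ref{it:star-free-in-monoid} into a star-free expression over graphs with $\le k+1$ ports, taking care that the final forgetting of all ports yields a graph language that, when intersected with the (star-free, by Example~\ref{ex:pathwidth-one}-style reasoning — or simply by the recognizability of pathwidth $\le k$ in $\pw_k$) set of graphs of pathwidth at most $k$, coincides with $L$ on that class.

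The main obstacle I anticipate is the bookkeeping in $(1)\Rightarrow(3)$: setting up the compositionality of separator-logic types over \emph{context} composition (as opposed to the symmetric $k$-fusion handled in Claim~\ref{claim:fusion-congruence}) so that left and right ports are tracked correctly, and then verifying that the resulting finite-index congruence on $\pw_k$ is actually generated — as a class closed under composition and Booleans — by finite sets. The subtlety is that $\pw_k$ is not free, so one cannot blindly quote \schutz's theorem for words; one must instead push the inductive construction of rank-$r$ formulas through directly, showing each quantifier and each atomic predicate corresponds to an operation already available in the class.
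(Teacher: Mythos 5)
Your architecture matches the paper's own proof, which is only two sentences long: the equivalence of (1) and (2) is immediate from Theorem~\ref{thm:star-free-is-logic}, and the equivalence with item~\ref{it:star-free-in-monoid} is dismissed as ``a straightforward induction'' --- exactly the two inductions you sketch (translating finite sets, Booleans and composition into treewidth operations for $(3)\Rightarrow(2)$, and translating formulas into monoid expressions for $(1)\Rightarrow(3)$). Your $(3)\Rightarrow(2)$ is essentially right, with one erroneous side-claim: the parenthetical assertion that the set of graphs of pathwidth at most $k$ is star-free is unjustified (recognizability in $\pw_k$ does \emph{not} imply star-freeness --- that implication is precisely what the hard direction of Theorem~\ref{thm:aperiodicity} buys, and only under the aperiodicity hypothesis) and contradicts the paper's stated belief that separator logic defines pathwidth at most $k$ only for finitely many $k$. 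Fortunately the claim is also unnecessary: conditions (1) and (2) only demand agreement on graphs of pathwidth at most $k$, so no relativizing intersection is needed; one only has to check that the complement relative to $\pw_k$ in item~\ref{it:star-free-in-monoid} and the absolute complement in star-free graph expressions yield languages that agree on that class.

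The substantive gap is in $(1)\Rightarrow(3)$, which stops exactly where the work starts. You correctly reject the shortcut ``the $r$-types form a finite monoid, hence the type-languages are star-free'' (false without aperiodicity), but the proposed replacement --- ``push the inductive construction of rank-$r$ formulas through directly'' --- is never carried out, and the one step that needs an idea is the existential quantifier. What makes the induction on quantifier rank work is a McNaughton--Papert-style factorization: a context $w \in \pw_k$ satisfies $\exists x\, \varphi(x)$ if and only if it decomposes as $u \cdot g \cdot v$ with $g$ a $k$-generator containing the witness vertex, and a compositionality claim in the style of Claim~\ref{claim:fusion-congruence} --- now for a ternary decomposition with a marked vertex in the middle factor --- shows that the truth of $\varphi$ depends only on the rank-$r$ types of $u$ and $v$ and on the finite datum consisting of $g$ together with the position of the witness in it. Hence the language of $\exists x\,\varphi$ is a finite union of compositions $L_{\tau} \cdot \{g\} \cdot L_{\tau'}$ of languages already placed in the class of item~\ref{it:star-free-in-monoid} by the induction hypothesis. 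You also owe the base case: the atomic edge and separator predicates among the left and right port constants must themselves be shown to lie in that class, which needs an argument analogous to condition (*) of Section~\ref{sec:from-logic-to-expressions}, adapted from fusions to linear compositions. Without these two pieces the direction $(1)\Rightarrow(3)$ is asserted rather than proved.
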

\begin{proof}
    The equivalence of the first two items follows immediately from Theorem~\ref{thm:star-free-is-logic}. The equivalence of the third item with the first two is proved using a straightforward induction.
\end{proof}

We will say that a language of graphs without ports is \emph{star-free over pathwidth at most $k$}, if it satisfies any of the equivalent conditions from the above lemma. In the proofs from this section, it will also be convenient to talk about star-free languages of contexts where the ports are allowed to defined; for such languages we use the third item of the lemma. Also, for contexts that have possibly defined ports one could also use separator logic; this would be done by associating to each context a model that has a (possibly undefined) constant for each left and right port.

\begin{remark}
    In the first two items of Lemma~\ref{lem:star-free-pathwidth}, we do not require the formula or star-free expression to check if the graph has pathwidth at most $k$.  An alternative definition, which would require the formula or expression to only be true in graphs of pathwidth at most $k$. This alternative definition   seems to be overly restrictive. The reasons is that, as mentioned in Remark~\ref{ex:pathwidth-one}, we believe that there is some $k$ that satisfies: (*) there is no formula of separator logic that defines pathwidth at most $k$ as a subset of the  graphs of pathwidth at most $k+1$.  In fact, this belief could even be confirmed by running an algorithm: as will follow from Corollary~\ref{cor:decide-separator} later on in this section, there is an algorithm that inputs $k$ and  answers whether or not it satisfies (*).
\end{remark}

The purpose of this section is to give an algebraic characterization of those recognizable languages of pathwidth at most $k$ which are star-free. Like \schutz's characterization for words, our characterization will lead to an algorithm that checks if a language can be defined. Another similarity with \schutz's characterization is that we use aperiodic monoids. However, there is a certain twist, which is described in the following example.

\begin{example}[Periods from reachability]\label{ex:reach-aperio}
    Let $L \subseteq \pw_2$ be the set of contexts of arity 2 that   admit a  path from left port 1 to right port 1. This language is easily seen to be  star-free. However, the syntactic monoid of this language contains a copy of the two-element group. Indeed,  
    let $w$ be the following context:
    \mypic{11}
    If we consider powers $w^n$ of this context, then the syntactic homomorphism of $L$  will assign different values to even and odd powers, because left port 1 can be connected to right port 1 in $w^n$ if and only if $n$ is even.
\end{example}

As the above example shows, our algebraic characterization of star-free languages  cannot simply say that the recognizing monoid is aperiodic. However, it turns out that the reachability periods described in the above example are  the only kind of periods  that are permitted for star-free languages. To formalize what we mean by ``reachability period'', we use the following homomorphism, which was introduced under the name of \emph{abstraction} in~\cite[Section 4.2.1]{bojanczykDefinabilityEqualsRecognizability2016a}.

\begin{definition}[Reachability homomorphism]\label{def:reachability-homomorphism}
    An \emph{inner path} in a context is defined to be  a path that does not use ports, with the possible exception of its source and target vertices. 
      Define the \emph{reachability homomorphism}, for contexts of arity $k \in \set{1,2,\ldots}$,  to be the function which maps a context of arity $k$ to the following information: (a) which indices $i \in \set{1,\ldots,k}$ describe persistent ports; and (b) for which pairs 
      \begin{align*}
      (x,y) \in \set{\text{left, right}} \times \set{1,\ldots,k}
      \end{align*}
      is there an inner path from port $x$ to port $y$. 
\end{definition}

    The reachability homomorphism is indeed a homomorphism, i.e.~if we know the values of this homomorphism for two contexts $w$ and $v$, then we also know the value of this homomorphism for the composition $w \cdot v$. An example of a language that is recognized by the reachability homomorphism is the contexts which admit an inner path from  left port 1 to right port 1.
  
We are now ready to present the main result of Section~\ref{sec:pathwidth}. Recall that an \emph{idempotent} in a monoid is an element $e$ that satisfies $e = e \cdot e$.

\begin{theorem}\label{thm:aperiodicity} A language $L \subseteq \pw_k$ is star-free if and only if it  is recognized by a  homomorphism $\alpha : \pw_k \to A$ into a finite monoid $A$  such that every context $w \in \pw_k$ satisfies the following implication:
        \begin{align*}
            \myunderbrace{\beta(w)}{here $\beta$ is the \\
            \scriptsize reachability homomorphism } \text{ is idempotent} \qquad \Rightarrow \qquad \myunderbrace{\text{$\alpha(w)$ is aperiodic}}{an element $a \in A$ is called \emph{aperiodic} if \\  \scriptsize   $a^m = a^{m+1}$ holds for some $m \in \set{1,2,\ldots}$}.
            \end{align*}    
\end{theorem}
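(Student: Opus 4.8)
The plan is to prove both implications of Theorem~\ref{thm:aperiodicity} separately, using the word-language machinery provided by Lemma~\ref{lem:generators}: since $\pw_k$ is finitely generated by the $k$-generators, every context of pathwidth at most $k$ is a value of a word over the finite alphabet $\Sigma$ of $k$-generators under the evaluation homomorphism $\mathsf{ev} : \Sigma^* \to \pw_k$. Thus a language $L \subseteq \pw_k$ is star-free (in the sense of Lemma~\ref{lem:star-free-pathwidth}\eqref{it:star-free-in-monoid}) precisely when $\mathsf{ev}^{-1}(L) \subseteq \Sigma^*$ is star-free in the classical sense of \schutz, because $\mathsf{ev}$ commutes with composition, is surjective, and the base finite languages match up. This reduces everything to a statement about word languages and the monoid they are recognized by.

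For the ``only if'' direction, suppose $L$ is star-free. Then $\mathsf{ev}^{-1}(L)$ is star-free over $\Sigma^*$, hence by \schutz's theorem it is recognized by a finite aperiodic monoid via some homomorphism $h : \Sigma^* \to N$. I would take $\alpha : \pw_k \to A$ to be the homomorphism obtained by pairing the reachability homomorphism $\beta$ with the homomorphism induced on $\pw_k$ by $h$ (one must check $h$ factors through $\mathsf{ev}$, which holds because $\mathsf{ev}(u) = \mathsf{ev}(v)$ implies equal membership in $L$ and one can arrange $h$ to be the syntactic homomorphism of $\mathsf{ev}^{-1}(L)$, which then factors through $\mathsf{ev}$ since $\ker(\mathsf{ev})$ saturates $\mathsf{ev}^{-1}(L)$). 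So $A = \beta(\pw_k) \times N'$ where $N'$ is the image; this $A$ recognizes $L$. The required implication is immediate: if $\beta(w)$ is idempotent, then in particular $\beta(w^m) = \beta(w)$ for all $m$, and the $N'$-component of $\alpha(w)$ is already aperiodic because $N'$ is a quotient of the aperiodic $N$; hence $\alpha(w)$ itself is aperiodic. (Here I use that aperiodicity of $a$ in a direct product is equivalent to aperiodicity of each coordinate, combined with the fact that $\beta(w)$ being idempotent forces the $\beta$-coordinate of the powers of $w$ to stabilize after one step.)

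For the harder ``if'' direction, assume $L$ is recognized by $\alpha : \pw_k \to A$ satisfying the implication. Pulling back along $\mathsf{ev}$, the word language $L' = \mathsf{ev}^{-1}(L)$ is recognized by $\alpha \circ \mathsf{ev} : \Sigma^* \to A$, and it suffices (by \schutz) to show $L'$ is recognized by an aperiodic monoid, equivalently that the syntactic monoid $M$ of $L'$ is aperiodic. The obstacle is that $M$ need not be aperiodic a priori — the reachability periods of Example~\ref{ex:reach-aperio} are genuinely present. The key observation to overcome this is that the reachability homomorphism $\beta$ refines the identity on $\pw_k$ only in a controlled way, and more importantly, that $\beta$ restricted to idempotent-$\beta$-value elements kills all periodicity of $\alpha$ by hypothesis. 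Concretely: take any word $w \in \Sigma^*$ and let $e = \mathsf{ev}(w^{n!})$ for $n = |A| \cdot |\beta(\pw_k)|$ so that $\beta(e)$ is idempotent (since $\beta(\pw_k)$ is a finite monoid, a sufficiently high power lands in its idempotent); then the hypothesis gives that $\alpha(e)$ is aperiodic, hence $\alpha(e)^{m} = \alpha(e)^{m+1}$ for some uniform $m$, which translated back says $\alpha(\mathsf{ev}(w^{n! m})) = \alpha(\mathsf{ev}(w^{n!(m+1)}))$. This shows the transition monoid of $\alpha\circ\mathsf{ev}$, and hence $M$, has no nontrivial subgroups: any group element would have to be a power of some $\mathsf{ev}(w)$ lying in a group, its high power $e$ would be an idempotent group identity, and aperiodicity of $\alpha(e)$ together with the group being generated inside the $e$-component forces the group to be trivial. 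Therefore $M$ is aperiodic, $L'$ is star-free over words, and pushing forward through $\mathsf{ev}$ (which preserves star-freeness, since $\mathsf{ev}$ is a surjective monoid homomorphism from the free monoid and star-free operations are respected) gives that $L$ is star-free over pathwidth at most $k$.

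I expect the main obstacle to be the group-triviality argument in the last paragraph: one must carefully argue that \emph{every} subgroup of the syntactic monoid $M$ arises, up to the relevant localization, from powers of a single generator whose stabilized value has idempotent $\beta$-image, so that the hypothesis applies. This likely requires the standard structure theory of finite monoids — looking at a maximal subgroup $G$ in a regular $\Hh$-class, picking an element $s$ with $s = \mathsf{ev}(v)$ mapping onto a generator of $G$, and showing $\beta(\mathsf{ev}(v^\omega))$ is idempotent (clear, it is an idempotent in $\pw_k$ so $\beta$ of it is idempotent) so that $\alpha$ of it is aperiodic; but $\alpha$ of it is also the identity of (the image of) $G$, and an aperiodic element that is a group identity in its $\Hh$-class forces that $\Hh$-class to be a singleton. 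Making this last sentence precise — that an aperiodic idempotent cannot be the identity of a nontrivial group inside $A$ — is the crux, and it follows because aperiodicity of $\alpha(e)$ means $\alpha(e)$ generates an aperiodic cyclic monoid, while a nontrivial group would give $\alpha(e)$ infinite (or nontrivial cyclic) order inside it; this is a routine but essential finite-semigroup fact that I would cite or spell out in one line.
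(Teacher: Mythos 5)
Your proposal has a fatal gap at its foundation: the claimed equivalence ``$L \subseteq \pw_k$ is star-free iff $\mathsf{ev}^{-1}(L) \subseteq \Sigma^*$ is star-free in the classical sense'' is false, and it fails on Example~\ref{ex:reach-aperio} of this paper. Take $L \subseteq \pw_2$ to be the contexts admitting a path from left port 1 to right port 1. This $L$ is star-free, and it is recognized by the reachability homomorphism $\beta$ itself, which satisfies the hypothesis of the theorem (if $\beta(w)$ is idempotent then $\beta(w)$ is trivially aperiodic). Yet its syntactic monoid contains a two-element group, witnessed by the crossing context $w$ of that example, so the syntactic monoid $M$ of $\mathsf{ev}^{-1}(L)$ is \emph{not} aperiodic and $\mathsf{ev}^{-1}(L)$ is \emph{not} a star-free word language. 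This breaks both halves of your argument: in the ``only if'' direction, \schutz's theorem cannot supply an aperiodic $N$ recognizing $\mathsf{ev}^{-1}(L)$, because the non-aperiodic syntactic monoid divides every recognizer (and pairing with $\beta$ does not help, since the $N'$-component would still have to carry the parity information); in the ``if'' direction, your goal of proving that $M$ has no nontrivial subgroups is unattainable, and indeed your own group-triviality argument silently assumes that the group sits inside a single $\Hh$-class whose identity has idempotent $\beta$-image --- but the period in Example~\ref{ex:reach-aperio} comes from an element $w$ whose $\beta$-image is \emph{not} idempotent, which is exactly the case the theorem's hypothesis deliberately excludes. The underlying technical reason the reduction fails is that $\mathsf{ev}^{-1}$ does not commute with concatenation and $\mathsf{ev}$ does not commute with complementation, so star-freeness relative to the quotient monoid $\pw_k$ is a genuinely weaker condition than star-freeness of the preimage in the free monoid $\Sigma^*$.

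The relativization ``aperiodic only on contexts with idempotent reachability image'' is where all the content of the theorem lives, and it cannot be discharged by passing to words. The paper proves the easy direction by an Ehrenfeucht--Fra\"iss\'e argument on the rank-$r$ type homomorphism, showing that if $\beta(w)$ is idempotent then $w^{3^r}$ and $w^{3^r+1}$ have the same $r$-type. The hard direction does follow the skeleton of \schutz's induction on infix classes, but with an extra, graph-specific case: by Lemma~\ref{lem:must-be-h-trivial}, an infix class that is not entirely aperiodic must consist of contexts having at least two bridges, and Lemma~\ref{lem:two-bridges-decomposition} (proved via the Dealternation Lemma for path decompositions) shows such contexts factor into $\Oo(k)$ pieces each of which is a $k$-generator or has strictly more persistent ports, hence maps to a strict infix --- pushing the class down the induction. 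That bridge analysis is the part of the proof your reduction has no way to reach.
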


A homomorphism $\alpha$ that satisfies the implication in the above theorem is called  \emph{aperiodic modulo reachability}. 
\begin{example}[Two disjoint paths]\label{ex:disjoint-paths}
    Let $L \subseteq \pw_2$ be the contexts with the following property: there exist two vertex disjoint paths, one path from the first left port to first right port, and another path from the second left port to the second right port. We will show that this language is not star-free, because its syntactic homomorphism is not aperiodic modulo reachability. (If any recognizing homomorphism is aperiodic modulo reachability, then the syntactic homomorphism is as well.) Indeed, let $w$ be the following context 
    \mypic{14}
    This context is mapped to an idempotent by the reachability homomorphism: all powers of this context have the property that all ports are distinct, and every two ports can be connected by an inner path. However, we will show that the image of this  context under the syntactic homomorphism of the language $L$ is not aperiodic, and therefore the language is not star-free.
    Indeed, if $m$ is even, then the context $w^m$ belongs to the language,  as explained with two paths (blue and orange) in the following picture for $m=4$:
    \mypic{16}
    On the other hand, if $m$ is odd, then it is impossible to find two such paths.  This example shows that separator logic cannot express the existence of two vertex disjoint paths with given sources and targets; this result was shown independently in~\cite[Corollary 3.10]{schrader2021firstorder}. Based on this impossibility, Schrader et al.~introduce an extension of separator logic that can talk about disjoint paths~\cite[Section 4]{schrader2021firstorder}. One could imagine that our characterization from Theorem~\ref{thm:aperiodicity} could be modified to handle this extension, possibly by considering other  homomorphisms than the reachability homomorphism used for $\beta$.
\end{example}

The characterization in Theorem~\ref{thm:aperiodicity} is  effective, as expressed in Corollary~\ref{cor:decide-separator} below. In the corollary, we use monadic second-order logic to define properties of contexts. This Courcelle's \mso\!$_1$ logic: it can quantify over vertices, sets of vertices, and it is equipped with a binary edge relation.
\begin{corollary}\label{cor:decide-separator}
    Given  $k \in \set{1,2,\ldots}$ and  a formula $\varphi$ of monadic second-order logic, one can decide if there is a formula of separator logic that is equivalent to $\varphi$ on graphs of pathwidth at most $k$.
\end{corollary}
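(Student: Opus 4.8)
The plan is to reduce the decision problem to an effective test of the algebraic condition from Theorem~\ref{thm:aperiodicity}, namely being \emph{aperiodic modulo reachability}. First I would observe that the hypothesis of the corollary is about \mso-definable properties of graphs of pathwidth at most $k$, and recall the classical fact (Courcelle, via the presentation in~\cite{bojanczykDefinabilityEqualsRecognizability2016a}) that an \mso formula $\varphi$, restricted to graphs of pathwidth at most $k$, is recognized by a computable homomorphism from the monoid $\pw_k$ into a finite monoid $A$; concretely, one builds the finite automaton/monoid for $\varphi$ over the alphabet of $k$-generators using Lemma~\ref{lem:generators}, and then quotients to the syntactic homomorphism $\alpha_\varphi : \pw_k \to A_\varphi$ of the language $L_\varphi \subseteq \pw_k$ defined by $\varphi$. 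Since $\pw_k$ is generated by the finite set of $k$-generators, all of $A_\varphi$, its multiplication table, and the values of $\alpha_\varphi$ on generators are computable, and hence so is the syntactic homomorphism together with its syntactic monoid.

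Next I would invoke the easy direction of Lemma~\ref{lem:star-free-pathwidth}: a set of pathwidth-$(\le k)$ graphs is definable by a formula of separator logic on such graphs if and only if it is star-free over pathwidth at most $k$, i.e.\ if and only if the corresponding language of contexts $L_\varphi \subseteq \pw_k$ is star-free in the sense of item~\eqref{it:star-free-in-monoid}. Then by Theorem~\ref{thm:aperiodicity}, $L_\varphi$ is star-free if and only if its syntactic homomorphism $\alpha_\varphi$ is aperiodic modulo reachability, since if \emph{any} recognizing homomorphism has this property then the syntactic one does, and the syntactic homomorphism recognizes $L_\varphi$. So the corollary reduces to: given the finite monoid $A_\varphi$, the homomorphism $\alpha_\varphi$ on generators, and the (finite, computable) reachability homomorphism $\beta$ on $\pw_k$, decide whether every $w \in \pw_k$ with $\beta(w)$ idempotent has $\alpha_\varphi(w)$ aperiodic.

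The remaining point is that this last condition, which quantifies over infinitely many contexts $w$, is decidable. I would form the product homomorphism $\gamma = \langle \alpha_\varphi, \beta\rangle : \pw_k \to A_\varphi \times B_k$, where $B_k$ is the (finite) target monoid of the reachability homomorphism; its image $M \subseteq A_\varphi \times B_k$ is a finite monoid, computable by saturating the generators under multiplication. The condition ``every $w$ with $\beta(w)$ idempotent has $\alpha_\varphi(w)$ aperiodic'' is then equivalent to the purely finite statement: for every element $(a,e) \in M$ with $e$ idempotent in $B_k$, the element $a$ is aperiodic in $A_\varphi$ (i.e.\ $a^{|A_\varphi|} = a^{|A_\varphi|+1}$) — here one uses that $\gamma$ is surjective onto $M$ and that a word $w$ with $\beta(w)$ idempotent exists mapping to $(a,e)$ iff $(a,e)\in M$ and $e$ is idempotent. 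Checking this is a finite computation over $M$, $A_\varphi$, $B_k$. I expect the main obstacle to be the first step: making precise and correct the passage from an \mso formula on graphs of bounded pathwidth to an explicitly computable syntactic homomorphism into a finite monoid recognizing a language of contexts in $\pw_k$, including the bookkeeping of how an abstract graph of pathwidth $\le k$ is presented as a word of $k$-generators and why the resulting monoid is finite and effectively constructible; the algebraic decidability at the end is routine once the monoid is in hand.
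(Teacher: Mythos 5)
Your proposal is correct and follows essentially the same route as the paper: compute a recognizing homomorphism via Courcelle's Theorem, minimize to the syntactic homomorphism, reduce to checking aperiodicity modulo reachability via Theorem~\ref{thm:aperiodicity}, and decide that condition by an exhaustive search over the (image of the) product of the syntactic monoid with the target of the reachability homomorphism. Your write-up is in fact somewhat more explicit than the paper's about the final finite check.
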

\begin{proof}
    We assume that the formula describes a property of graphs without ports, but the same proof would also work with ports.
    By Courcelle's Theorem, one can compute a monoid homomorphism from $\pw_k$ to some finite monoid, which recognizes the language of $\varphi$ on graphs of pathwidth at most $k$. Next, using a natural fixpoint procedure (called Moore's minimization algorithm), one can minimize the monoid, and thus we can assume that $h$ is the syntactic homomorphism. As mentioned in Example~\ref{ex:disjoint-paths}, if any recognizing homomorphism is aperiodic modulo reachability, then the syntactic homomorphisms has this property as well. Therefore, by Theorems~\ref{thm:star-free-is-logic} and \ref{thm:aperiodicity}, the formula $\varphi$ is equivalent to a formula of separator logic on graphs of pathwidth at most $k$ if and only if the syntactic homomorphism $\alpha$ is aperiodic modulo reachability. The latter property can be checked effectively, by an exhaustive search of the  product of the syntactic monoid  with the   target monoid of the reachability homomorphism.
\end{proof}

In the above corollary, the language is represented by a formula of monadic second-order logic (one could also use the extension of this logic which allows modulo counting). From the point of view of computational complexity, a bottleneck in the algorithm is Courcelle's Theorem, where the conversion from logic to a monoid requires a tower of exponentials. This tower can be avoided if  the language is represented by  a homomorphism into a finite monoid; in this case the algorithm runs in polynomial time.

The rest of Section~\ref{sec:pathwidth} is devoted to proving Theorem~\ref{thm:aperiodicity}. 

The easier implication in the theorem is that  if a language is star-free, then it is recognized by a  homomorphism that  is aperiodic modulo reachability.  For this proof we use separator logic instead of star-free expressions, and  a standard a \ef argument, which is only sketched. See Remark~\ref{remark:not-schutz} for why we do not follow \schutz's original proof by using star-free expressions.

As mentioned before, we can use separator logic to define properties of contexts; this is done by a viewing a context as a model in the same way as a graph with ports (the differences being that there are left and right ports, and some can be undefined). 
For a quantifier rank $r \in \set{0,1,\ldots}$ define the \emph{$r$-type} of a context $w$ of arity $k$ 
to be the set of sentences of separator logic that have quantifier rank at most $r$ and  are true in the context. A standard compositionality argument, similar to Claim~\ref{claim:fusion-congruence}, shows that for every $r$, the function  which maps a context to its $r$-type is a homomorphism into a finite monoid. Furthermore, if $r$ is the quantifier rank of the formula that defines $L$,  then this homomorphism recognizes the language $L$.  The following lemma shows that this  homomorphism is aperiodic modulo reachability, thus proving the easier implication in Theorem~\ref{thm:aperiodicity}.

\begin{lemma}
    Suppose that $w$ is a context which is mapped to an idempotent by the reachability homomorphism. For every quantifier rank $r$, there is some  $m \in \set{1,2,\ldots}$ such that the contexts $w^m$ and $w^{m+1}$  have the same $r$-type. 
\end{lemma}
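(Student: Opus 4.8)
The plan is to run the classical \ef pumping argument --- the one showing that $a^m$ and $a^{m+1}$ satisfy the same first-order sentences of quantifier rank $r$ once $m$ is exponential in $r$ --- with the context $w$ in the role of the letter $a$, and to use the hypothesis that $\beta(w)$ is idempotent precisely where the separator predicates must survive the pumping. Fix the quantifier rank $r$ and choose $m$ large enough (exponential in $r$ suffices). View $w^m$ as a concatenation of $m$ \emph{blocks} $B_1,\dots,B_m$, each an isomorphic copy of $w$, where $B_t$ and $B_{t+1}$ share the \emph{$t$-th seam} $\sigma_t$, i.e.\ the at most $k$ vertices obtained by gluing the right ports of $B_t$ to the left ports of $B_{t+1}$; write also $\sigma_0$ for the left ports of $w^m$ and $\sigma_m$ for its right ports, which carry the port constants of the context model (a persistent port of $w$ yields a single vertex lying on every seam). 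We play the $r$-round \ef game on the separator models of $w^m$ and $w^{m+1}$, with the two sides already agreeing on the port constants, so that $\sigma_0,\sigma_m$ of $w^m$ are matched with $\sigma_0,\sigma_{m+1}$ of $w^{m+1}$.

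Duplicator plays the standard strategy: she maintains an order-preserving partial matching of the pebbled blocks of the two sides, together with the invariant that between any two consecutive matched blocks, and at the two ends, the number of unpebbled blocks on the two sides is either equal or else at least $2^{s}$ on both sides, where $s$ is the number of rounds still to come. When Spoiler pebbles a vertex of a block $B_t$ on one side, Duplicator pebbles the \emph{same} vertex of $w$ inside a matched block $B_{t'}$ on the other side, chosen to restore the invariant; note that since $2^{s}\ge 1$, the invariant forces a gap of $0$ unpebbled blocks on one side to correspond to a gap of $0$ on the other, so adjacency and coincidence of pebbled blocks are preserved and Duplicator then copies positions exactly. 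All of this is routine. Because matched pebbles are literally the same vertex of the same copy of $w$, placed in matched blocks lying at the same small distance from one another, the two final pebblings satisfy the same atomic formulas other than the separator predicates: equalities, the edge relation, the unary colour predicates, and the port constants all agree.

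The remaining, and only substantial, point is that the two pebblings satisfy the same separator predicates: for every index set $I$ and every two pebbles $a_i,a_j$, deleting the pebbles indexed by $I$ leaves $a_i$ and $a_j$ in the same connected component of $w^m$ if and only if the same holds in $w^{m+1}$. The connected components of $w^m$ with the $I$-pebbles removed can be computed by scanning the blocks left to right, carrying as state the partition into components --- within the part scanned so far --- of the surviving vertices of the current seam together with the surviving pebbles already encountered; this is a bounded amount of data, so the scan is a run in a finite monoid. Between two consecutive pebbled blocks the run passes through a maximal corridor of unpebbled blocks, which contributes a fixed ``fringe'' coming from the two copies of $w$ bounding it (matched exactly by Duplicator) together with a middle part $w^{t}$, where $t\ge 0$ is the number of whole unpebbled blocks it spans. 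If $t\ge 1$ the reachability behaviour of this middle part is $\beta(w^{t})=\beta(w)^{t}=\beta(w)$, since $\beta$ is a homomorphism and $\beta(w)$ is idempotent; if $t=0$ it is the identity of the reachability monoid. By the game invariant a corridor of $w^m$ and its counterpart in $w^{m+1}$ have $t=0$ on both sides or $t\ge 1$ on both sides, hence contribute the same data, while the pebbled blocks themselves contribute identical data since Duplicator copied them exactly. Therefore the two scans end in the same state, so the connectivity relation among the pebbles and the interfaces is the same on both sides, as required. This shows Duplicator wins, so $w^m\equiv_r w^{m+1}$; the same then holds with $m$ replaced by any larger exponent, since taking the $r$-type is a monoid homomorphism.

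The main obstacle is the argument of the third paragraph: making precise the finite ``scanning'' monoid that computes connectivity under deletion of a bounded vertex set, and verifying that a corridor's contribution to it really is a value of $\beta$, so that idempotency of $\beta(w)$ collapses corridors of every positive length to a single value. Everything else --- the composition of Duplicator's strategies and the preservation of the non-separator atoms --- is entirely routine.
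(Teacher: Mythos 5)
Your proof is correct and follows essentially the same route as the paper's: the standard Ehrenfeucht--Fra\"iss\'e block-matching strategy on $w^m$ versus $w^{m+1}$ (the paper takes $m=3^r$ and cites Straubing for the invariant that matched pebbles have equal offsets, equal index order, and gaps that are equal or both large), with idempotency of $\beta(w)$ invoked exactly where you invoke it, namely to show that corridors of any positive number of unpebbled copies of $w$ contribute the same reachability data, so the separator predicates survive. The ``scanning monoid'' you isolate as the main obstacle is precisely the step the paper compresses into ``it follows that the two pebblings satisfy the same reachability predicates,'' so your write-up is, if anything, more explicit than the original.
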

\begin{proof}
    Take $m$ to be $3^r$.
     A vertex of $w^m$
     has a representation  as a pair 
    (vertex of $w$, number in $\set{1,\ldots,m}$), with the number indicating which copy of $w$ inside $w^m$ is used. This representation is not  unique, due to ports being   shared across consecutive copies. To ensure uniqueness,  define the \emph{canonical representation} to be the representation which uses the smallest number on the second coordinate. This canonical representation is unique, and its two coordinates are called the \emph{offset} (a vertex of $w$) and \emph{index} (a number in $\set{1,\ldots,m}$), respectively. Similarly we define the index and offset for vertices of $w^{m+1}$.  
    Using a standard argument, see~\cite[proof of Theorem IV.2.1]{straubingFiniteAutomataFormal1994}, one can show that Duplicator has a strategy in the $r$-round \ef game on $w^m$ and $w^{m+1}$ which ensures  that once all $r$ rounds have been played and $i, j \in \set{1,\ldots,r}$ are pebble names, then: (1) the offsets of pebble $i$ in $w^m$ and $w^{m+1}$ are the same;
        and (2) the order on indices of pebbles $i$ and $j$ are the same in $w^m$ and $w^{m+1}$; and (3)
         the indices of pebbles $i$ and $j$ differ by a number that is not 1,    both in $w^m$ and   $w^{m+1}$. 
    Because of the above assumptions, and since one or more copies of $w$ are equivalent with respect to the reachability homomorphism, it follows that the two pebblings satisfy the same reachability predicates. This, in turn, means that Duplicator's strategy was winning.
\end{proof}

This completes the proof of the easier implication in Theorem~\ref{thm:aperiodicity}. We finish this section with a discussion on why we proved the easier implication using separator logic instead of star-free expressions.

\begin{remark}[\schutz product fails for graphs]
    \label{remark:not-schutz}
    When proving that  every star-free language is recognized by an aperiodic monoid~\cite[Remark 2]{Schutzenberger65}, \schutz does not use logic, but a monoid construction that is now called the \emph{\schutz product}. We describe this construction here and explain why it does not work for graphs of bounded pathwidth. Suppose that we have two monoid homomorphisms $\set{h_i : \Sigma^* \to M_i}_{i \in \set{1,2}}$ with the same domain. If we want to recognize the concatenation of two languages, recognized by $h_1$ and $h_2$ respectively, it is natural to consider the function  
\begin{align*}
  H : \Sigma^* \to M =  \myunderbrace{M_1 \times M_2 \times \powerset(M_1 \times M_2)}{\schutz product of $M_1$ and $M_2$}
\end{align*}
which maps a word $w$ to: its values under $h_1$ and $h_2$, as well as the set of all  pairs $(h_1(w_1),h_2(w_2))$ that range over decompositions $w = w_1 \cdot w_2$. It is not hard to see that $H$ is a monoid homomorphism, using a suitably defined monoid structure on $M$, this homomorphism recognizes every language obtained by concatenating a language recognized by $h_1$ with a language recognized by $h_2$, and finally,  if both $M_1$ and $M_2$ are aperiodic, then so is $M$. These properties imply that word languages recognized by aperiodic monoids are closed under concatenation. Unfortunately, the same strategy does not work for graphs: if we replace $\Sigma^*$ by the  monoid $\pw_k$, then $H$ is no longer a monoid homomorphism. The difference is that words have the following property that is not shared by contexts: if we want to split a word $w_1 \cdot w_2$ into to parts, then this is done by either splitting $w_1$ into two parts and keeping $w_2$ intact, or by keeping $w_1$ intact and  splitting $w_2$ into two parts. It is possible that there is a way to fix the \schutz product for graphs of bounded pathwidth, but we do not pursue this here.
\end{remark}

\subsection{From aperiodicity to separator logic}
The rest of Section~\ref{sec:pathwidth} is devoted to the harder implication of Theorem~\ref{thm:aperiodicity},  
which says that  if a language of contexts  is recognized by some homomorphism  that is aperiodic modulo reachability, then the language is star-free.
When proving the harder implication, we use the same strategy as  \schutz's original proof for words. There are, however,   extra arguments that are specific to graphs. These arguments revolve mainly around the notion of bridges, which is described below.

\subsubsection{Bridges}
\label{sec:bridges}
We partition the edges of a context into  \emph{inner components} as follows:  two edges of  a context are in the same inner component if they can be connected by using non-port vertices and other edges.   In other words, two edges are in the same inner component if they both appear in one inner path. Here is a picture of a context with its  inner components:
\mypic{9}

We say that an inner component is \emph{incident} to some port if some edge from the inner component is incident to that port. 
\begin{definition}
    [Bridge] \label{def:bridges} A vertex in a context is called \emph{persistent} if it is both a left and right port.  An inner component in a context is called  a \emph{bridge} if, after removing the persistent vertices, the component is  incident to a left and a right port.
\end{definition}

For example, in the most recent picture there is  exactly one bridge, namely the inner component drawn using solid lines \inlinepic{22}. This inner component is incident to the second left port and the first right port, both of which are not persistent.  The remaining inner components are not bridges, because they are not incident to the first right port, which is the only right port that is not persistent.

\begin{example}\label{ex:periodic-with-two-bridges}
    Consider  the problematic context 
\mypic{11}
from Example~\ref{ex:reach-aperio}, which is periodic with respect to the reachability homomorphism. In this context, there are no persistent vertices, and two bridges, namely each of the two edges is a singleton bridge.  Hence, this problematic context has two bridges.
\end{example}

In the Example~\ref{ex:periodic-with-two-bridges}, two bridges were used to get a period  under the reachability homomorphism. It turns out that this is the case in general.

\begin{lemma}
    \label{lem:technical-blob} If a context  has at most one bridge,  then its image  under the reachability homomorphism is aperiodic. 
\end{lemma}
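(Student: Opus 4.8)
The plan is to fix a context $w$ with at most one bridge and show that the sequence $\beta(w), \beta(w^2), \beta(w^3), \ldots$ is eventually constant, where $\beta$ is the reachability homomorphism. Since $\beta$ records only (a) which indices are persistent and (b) which pairs of ports are joined by an inner path, and since the set of persistent indices is clearly the same in every power $w^n$ with $n \ge 1$, the whole task reduces to analyzing how the inner-path reachability relation between ports changes as we iterate $w$. First I would set up notation: write the ports of $w$ as left ports $L_1,\ldots,L_k$ and right ports $R_1,\ldots,R_k$, recall that the persistent vertices are those that are simultaneously a left and right port (and so, by the compatibility condition, have matching indices), and observe that in $w^n$ the $i$-th ``internal copy'' of $w$ shares its left boundary with the right boundary of the previous copy and its right boundary with the left boundary of the next copy.

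\textbf{Key steps.} The core case analysis is on the (at most one) bridge of $w$. First, the persistent vertices pass through all copies unchanged, and inner paths through them behave trivially, so I would reduce to the context $w'$ obtained by deleting the persistent vertices; $w'$ then has at most one bridge and no persistent vertices, and it suffices to understand reachability in $w'^n$. Now in $w'$, consider the bipartite-like ``connection pattern'': which left ports and right ports lie in a common inner component. The crucial point is that gluing copies of $w'$ lets an inner path cross from one copy into the next \emph{only} through a shared boundary port, i.e.\ through some right port $R_i$ of copy $j$ that is identified with the left port $L_i$ of copy $j{+}1$. If $w'$ has no bridge at all, then no inner component of $w'$ touches both a left port and a right port, so no inner path in $w'^n$ can ever cross a copy boundary; hence the reachability relation in $w'^n$ is literally the same as in $w'$ for all $n \ge 1$, and we are done. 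If $w'$ has exactly one bridge $B$, incident to some nonempty set $\Lambda$ of left indices and some nonempty set $P$ of right indices, then the only way an inner path of $w'^n$ can travel from copy $j$ to copy $j{+}1$ is to enter copy $j$'s bridge, exit at some right port $R_i$ with $i \in P$, and re-enter copy $j{+}1$'s bridge at $L_i$ with $i \in \Lambda$. If $\Lambda \cap P = \emptyset$, a path can make at most one crossing (it cannot continue past the next boundary), and the reachability relation stabilises at $w'^2$; if there exists $i \in \Lambda \cap P$, then the bridge in each copy connects to the bridge in the next via the common index $i$, so the union of all copies of $B$ forms a single connected inner component of $w'^n$, and from $n = 2$ onward the reachable pairs of ports are: (left port $L_a$ of $w'^n$, right port $R_b$ of $w'^n$) whenever $a \in \Lambda$ and $b \in P$, plus the within-first-copy left-left pairs and within-last-copy right-right pairs coming from $\Lambda$ and from $P$ respectively, plus whatever non-bridge inner components contribute locally --- and all of this is visibly independent of $n$ for $n \ge 2$. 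In every case $\beta(w^n)$ is constant for $n \ge 2$, which is the claimed aperiodicity (indeed with $m = 2$, if one wants $\beta(w)^m = \beta(w)^{m+1}$, though one should double-check whether the $n=1$ value might already agree).

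\textbf{Main obstacle.} The delicate part is the bookkeeping in the one-bridge case: one must be careful that a \emph{single} bridge really does force all cross-boundary travel to funnel through that one inner component in every copy, so that the reachability relation cannot ``grow'' in a periodic way --- the only possibilities are that it saturates immediately or after two steps. In particular I would want to rule out the scenario (as in Example~\ref{ex:reach-aperio} and Example~\ref{ex:periodic-with-two-bridges}) where two distinct inner components, one connecting a left port to a shared boundary port and another connecting that boundary port onward, alternate between copies to produce a parity phenomenon; the hypothesis of at most one bridge is exactly what excludes this, and the proof should make that exclusion explicit. A secondary technical point is handling inner components that touch only left ports, or only right ports, or only ports of the same boundary --- these are \emph{not} bridges and contribute only to within-first-copy or within-last-copy reachability, so they never interact across boundaries; I would state this as a short observation before the case analysis so it does not clutter the main argument.
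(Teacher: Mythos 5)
Your high-level strategy is the same as the paper's: reduce to a context with no persistent vertices, observe that an inner path in $w^n$ can pass from one copy to the next only through a shared boundary vertex, and use the uniqueness of the bridge to argue that the port-to-port reachability relation stabilises. However, the case analysis you build on top of this contains a false step. You claim that if $\Lambda \cap P = \emptyset$ (the left and right indices touched by the bridge are disjoint), then a path can make at most one crossing because ``it cannot continue past the next boundary.'' This is wrong, because the boundary vertices between consecutive copies of $w$ are \emph{not} ports of $w^n$, so the one-sided inner components (those incident only to left ports, or only to right ports) can reroute a path along a boundary before it re-enters the next copy's bridge. Concretely, take a bridge incident only to $L_1$ and $R_2$ (so $\Lambda=\{1\}$, $P=\{2\}$, $\Lambda\cap P=\emptyset$) together with a right-only component incident to $R_1$ and $R_2$. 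In $w^2$ the path
$L_1 \to \text{(bridge of copy 1)} \to R_2(\text{copy }1) \to \text{(right component of copy 1)} \to R_1(\text{copy }1) = L_1(\text{copy }2) \to \text{(bridge of copy 2)} \to R_2(\text{copy }2)$
makes two crossings and connects $L_1$ to $R_2$, which your analysis declares impossible; iterating, it crosses arbitrarily many boundaries. So both the ``at most one crossing'' claim and the reachability relation you derive from it are incorrect, and the dichotomy you draw between $\Lambda\cap P=\emptyset$ and $\Lambda\cap P\neq\emptyset$ is not the right one (the relevant condition involves the closure of $P$ under the connectivity induced by the one-sided components, not $P$ itself). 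Your final conclusion may still be true, but the argument as written does not establish it.

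This interaction between the bridge and the one-sided components is exactly what the paper's proof is designed to control, and it does so by a different mechanism: rather than classifying crossings, it defines $S^n_i$ to be the set of vertices of $w$ whose $i$-th copy is reachable by an inner path from the $s$-th left port, proves the monotonicity $S^n_i \subseteq S^n_{i-1}$ by a path-shifting argument (any path penetrating to copy $i$ must use every copy of the unique bridge along the way, so its tail can be shifted one copy to the left), and then intersects the stabilised left-reachable set with the symmetric right-reachable set. You correctly identify in your ``main obstacle'' paragraph that one must rule out two components alternating between copies to create a parity, but the argument you actually give does not perform that exclusion; you would need to replace the $\Lambda\cap P$ case split with an analysis of the joint connectivity of the bridge and the one-sided components across boundaries (or adopt the monotonicity argument above).
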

\begin{proof}
    Let $w$ be a context of arity $k$.
    The reachability homomorphism tells us which ports are defined, which ports are persistent, and  which pairs of source and target ports
    can be connected by an inner path. The first two pieces of information, about which ports are defined and persistent, does not change as we take powers $w^n$ of a context. The interesting part is the connectivity using inner paths. The intuitive idea is that this connectivity is aperiodic, since all paths that traverse the context must pass through the same bridge, and therefore there is no place for periodic swapping. A more precise analysis is given below.

    Suppose that we want to know if two ports  are connected by an inner path in a power $w^n$.
    Suppose first that the two ports are on the same side, say both are left ports. It is easy to see that if two left ports can be joined by an inner path  in $w^n$, then they will also be joined by the same inner path in $w^{n+1}$.   Therefore, the  answer to the question in the claim can only go from ``no'' to ``yes'', and so it eventually stabilizes. A symmetric argument works when both the source and target are right ports. 
    We are left with the case when the source  is a left port and the target is a right port. 
    Also, none of these ports are persistent, since otherwise they would both be left ports, or they would both be right ports, and we could use the arguments above. This remaining  case is dealt with by the following claim, which completes the proof of the lemma. 
    \begin{claim}
        Let  $s , t \in \set{1,\ldots,k}$ be such that  neither the $s$-th left port nor the $t$-th right port are persistent. All but finitely many $n \in \set{1,2,\ldots}$ lead to the same answer to the question:
                    does  $w^n$ have an inner path from the $s$-th left port to the $t$-th right port?
    \end{claim}
    \begin{proof}
        For  numbers $n \in \set{1,2,\ldots}$ and $i \in \set{1,\ldots,n}$, one can speak of the $i$-th copy of $w$ inside $w^n$ in the following sense:
        \mypic{19} These copies are not disjoint, since port vertices appear in several copies.
         The above picture does not use persistent ports, and indeed this can be assumed without loss of generality (which will simply the reasoning). Indeed,  the answer to the question in the claim does not change if replace $w^n$ by either of the following two contexts, which are furthermore equal to each other: the context obtained from $w^n$ by  removing persistent ports; or the $n$-th power of the context obtained from $w$ by removing persistent ports. From now on, we assume that $w$ has no persistent ports.
        
        Define $S^n_i$ to be the vertices $x$ of $w$ such that $w^n$ admits an inner path from its $s$-th left port to the $i$-th copy of $x$. We will use the assumption on having a unique bridge to show that 
        \begin{align}\label{eq:bridge-decreases}
        S^{n}_{i} \subseteq S^n_{i-1}.
        \end{align}
        To prove the above inclusion, consider an inner path $\pi$ in $w^n$ that begins in the $s$-th left port and reaches the $i$-th copy of $x$ for $i > 2$.  The key observation is that, since $w$ has a unique bridge, then for every $j \in \set{1,2,\ldots, i-1}$, the path $\pi$ must use at least one edge from the $j$-th copy of the unique bridge; otherwise it would be impossible to reach the $i$-th copy of $w$.  Among all edges used by  $\pi$, consider the last one that is   from the second copy of the unique bridge in $w$, and let $\sigma$ be the suffix of $\pi$ that begins in this edge.  This  suffix  $\sigma$ cannot use any vertices from the first copy of $w$, and therefore we can shift $\sigma$ by one copy to the left, and get a legitimate inner path in $w^{n}$, call it $\sigma_{\leftarrow}$. The first edge used by  $\sigma_{\leftarrow}$ is from the first copy of the unique bridge. The path $\pi$ must use some edge from the first copy of the unique bridge, and this edge can be connected by an inner path to the first edge used by $\sigma_{\leftarrow}$, since they both belong to the first copy of the unique bridge. Summing up, we have shown that the target vertex of $\pi$ can be shifted by one copy to the left, without affecting reachability via an inner path from the $s$-th left port, thus proving~\eqref{eq:bridge-decreases}.

        By the same kind of monotonicity argument as we used when talking about paths connecting two left ports,  one can show that there is some threshold  such that 
        \begin{align}
            S_i^{n} = S_i^{n+1} 
        \end{align}
        holds for all  $i < n$ such that the distance from $n$ to $i$ exceeds the threshold.
        For $i \in \set{1,2,\ldots}$,  let us write $S_i$ for the value of $S^n_i$ for some (equivalently, every) $n$ whose distance to $i$ exceeds the threshold. By~\eqref{eq:bridge-decreases}, we know that the sequence $S_1, S_2,\ldots$ is decreasing, and therefore it must eventually reach some stable value, which we denote by $S$. 
         Define $T^n_i$ in the same way as $S^n_i$, except that the $t$-th right port  is used instead of the $s$-th left port. By the same argument as for $S^n_i$, there is some stable value $T$ such that  $T= T^n_i$
        holds for all $i<n$ such that the distance from $1$ to $i$  exceeds the threshold. Consider now a number $n$ that is bigger than the sum of both thresholds (the threshold for $S$ and the threshold for $T$). If we take $i$ to be half of $n$, rounded up, then 
        \begin{align*}
        S = S^n_i \qquad \text{and} \qquad T = T^n_i.
        \end{align*}
        If the sets $S$ and $T$ have nonempty intersection, then $w^n$ admits an inner path from the $s$-th left port to the $t$-th right port; if the sets are disjoint then there is no such path. In other words, the fixed sets $S$ and $T$  determine the answer to the question in the claim for all large enough $n$, thus proving the claim.
    \end{proof}
\end{proof}
The purpose of the above lemma is to get the following corollary.
\begin{corollary}\label{cor:bridges-aperiodic}
    If  $w$ is a context with   at most one bridge,  and $\alpha : \pw_k \to A$ is a homomorphism   that  is aperiodic  modulo reachability, then $\alpha(w)$ is aperiodic.
\end{corollary}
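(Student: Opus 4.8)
The plan is to deduce this directly from Lemma~\ref{lem:technical-blob} and the hypothesis on $\alpha$; beyond those two ingredients the argument is a short monoid-theoretic computation, and the only point needing a little care is how periods behave under taking powers. Throughout, write $\beta$ for the reachability homomorphism on contexts of arity $k$, so that ``$\alpha$ is aperiodic modulo reachability'' reads: for every $v \in \pw_k$, if $\beta(v)$ is idempotent then $\alpha(v)$ is aperiodic.

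First I would invoke Lemma~\ref{lem:technical-blob}: since $w$ has at most one bridge, $\beta(w)$ is aperiodic, i.e.\ $\beta(w)^m = \beta(w)^{m+1}$ for some $m \ge 1$. From this I would note two things: that $\beta(w)^j = \beta(w)^m$ for every $j \ge m$, and that this common value is idempotent (because $\beta(w)^m \cdot \beta(w)^m = \beta(w)^{2m} = \beta(w)^m$). Consequently, for every $j \ge m$ the context $w^j$ lies in $\pw_k$ and has $\beta(w^j) = \beta(w)^m$ idempotent, so the hypothesis on $\alpha$ yields that $\alpha(w^j) = \alpha(w)^j$ is aperiodic in $A$.

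Next I would instantiate this at $j = m$ and at $j = m+1$, producing integers $\ell_1, \ell_2 \ge 1$ with $\alpha(w)^{m\ell_1} = \alpha(w)^{m\ell_1 + m}$ and $\alpha(w)^{(m+1)\ell_2} = \alpha(w)^{(m+1)\ell_2 + (m+1)}$. Finally I would apply the standard fact that, in a finite monoid, an equality $a^N = a^{N+d}$ (with $N, d \ge 1$) forces the eventual period of $a$ to divide $d$ — obtained by multiplying through by a sufficiently high power of $a$ to land past the index. Applied to $a = \alpha(w)$ in the finite monoid $A$, the two displayed equalities show that the period of $\alpha(w)$ divides both $m$ and $m+1$, hence divides $\gcd(m,m+1) = 1$; so $\alpha(w)$ is aperiodic, as required.

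The main obstacle has in fact already been cleared in Lemma~\ref{lem:technical-blob}, whose content is precisely that having at most one bridge rules out reachability periods. Given that lemma, the corollary is essentially immediate; the one subtlety worth flagging is that $\alpha(w)^m$ being aperiodic does \emph{not} by itself imply that $\alpha(w)$ is aperiodic, which is why the argument also uses the power $w^{m+1}$ and the coprimality of $m$ and $m+1$ rather than stopping after $w^m$.
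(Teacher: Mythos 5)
Your proof is correct and follows essentially the same route as the paper: both apply Lemma~\ref{lem:technical-blob} to get that $\beta(w)$ is aperiodic, observe that all sufficiently high powers $\beta(w^j)$ are then idempotent so that the hypothesis on $\alpha$ makes $\alpha(w)^j$ aperiodic for all large $j$, and finally recover aperiodicity of $\alpha(w)$ itself from that of two consecutive powers. Your gcd/divisibility phrasing of that last step is just a repackaging of the paper's direct computation with $a^{m+n}$ and $a^{m+n+1}$, and you correctly flag the same subtlety the paper does, namely that aperiodicity of a single power would not suffice.
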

\begin{proof}
    This is an almost immediate application of Lemma~\ref{lem:technical-blob} and the definition of aperiodicity modulo reachability, with one caveat. To use the lemma, we would like to have the implication 
    \begin{align}\label{eq:weaker-assumption}
        \text{$\beta(w)$ is aperiodic}\qquad \Rightarrow \qquad \text{$\alpha(w)$ is aperiodic}, 
        \end{align}
        since aperiodicity of $\beta(w)$ is what we get from Lemma~\ref{lem:technical-blob}. On the other hand, the  definition of aperiodicity modulo reachability has an implication with a stronger assumption, namely: 
    \begin{align}
        \label{eq:stronger-assumption}
        \text{$\beta(w)$ is idempotent}\qquad \Rightarrow \qquad \text{$\alpha(w)$ is aperiodic}.
        \end{align}
    However, we will show that the two implications~\eqref{eq:weaker-assumption} and~\eqref{eq:stronger-assumption} are the same. Indeed, suppose that the assumption of~\eqref{eq:weaker-assumption} holds, namely $\beta(w)$ is aperiodic. This means that for sufficiently high powers $n$, $\beta(w^n)$ is an idempotent, and therefore $\alpha(w^n)$ is aperiodic. In other words, we have established that all sufficiently high powers of $\alpha(w)$ are aperiodic. To conclude that $\alpha(w)$ is aperiodic, and thus prove~\eqref{eq:weaker-assumption}, we use the following observation about finite monoids:
    \begin{itemize}
        \item[(*)] If $a$ is an element of a  monoid such that some two consecutive powers $a^n$ and $a^{n+1}$ are aperiodic, then also $a$ is aperiodic.
    \end{itemize}
    To see why (*) is true, we observe that all  sufficiently large  $m$ satisfy
    \begin{align*}
    a^{m+n+1} \qquad  \myunderbrace = {because $a^{n+1}$ \\
    \scriptsize is aperiodic} \qquad
     a^m 
     \qquad  \myunderbrace = {because $a^{n}$ \\
    \scriptsize is aperiodic} \quad
       a^{m+n},
    \end{align*}
    which implies that all sufficiently large $m$ satisfy $a^m= a^{m+1}$, thus establishing aperiodicity of $a$. 
\end{proof}

\subsubsection{The induction}
Having established aperiodicity of contexts with at most one bridge, we  return to the proving the harder implication of Theorem~\ref{thm:aperiodicity}, which says that if a language is  recognized by a homomorphism that is aperiodic modulo reachability, then it is star-free. 
Consider a  homomorphism
\begin{align*}
\alpha : \pw_k \to A,
\end{align*}
which is aperiodic modulo reachability. We want to show that every language recognized by $\alpha$ is star-free. Here,  star-free is understood in the sense of  item 3 from Lemma~\ref{lem:star-free-pathwidth}, i.e.~the language can be generated in the monoid $\pw_k$, from finite languages, by using Boolean operations and concatenation. Without loss of generality, we can assume that $\alpha$ refines the reachability homomorphism. Indeed, if  we  add the outputs of the reachability homomorphism to a homomorphism $\alpha$, then the new homomorphism will refine the reachability homomorphism, will  still be aperiodic modulo reachability, and will recognize all languages recognized by the original homomorphism $\alpha$.

Fix a homomorphism $\alpha$ for the rest of this section; this homomorphism is assumed to be  aperiodic modulo reachability, and to refine the  reachability homomorphism. Also without loss of generality, we can assume that $\alpha$ is  surjective. We will show that every language recognized by $\alpha$ is star-free.  
We say that a monoid element  $a \in A$ is \emph{star-free} if the inverse image $\alpha^{-1}(a)$ is star-free. We will show that every element of the monoid $A$ is star-free; this will immediately imply that every language recognized by $\alpha$ is star-free, since every such language  is a finite union of inverse images $\alpha^{-1}(a)$. 

\paragraph*{Green's relations.}
In the proof, we will use Green's relations, so we begin with a quick summary of these relations and their basic properties. Green's relations are  three pre-orders on elements of the monoid $A$, called \emph{prefix}, \emph{suffix} and \emph{infix}, which are defined as follows in terms of monoid ideals (the  terminology of infixes, prefixes and infixes taken from~\cite[Section 1.2]{bojanczyk_recobook}):
\begin{align*}
\myunderbrace{a \in b \cdot A}{$b$ is a prefix of $a$}
\qquad
\myunderbrace{a \in A \cdot b}{$b$ is a suffix of $a$}
\qquad
\myunderbrace{a \in A \cdot b \cdot A}{$b$ is an infix of $a$}.
\end{align*}
All of these relations are pre-orders, i.e.~they are transitive and reflexive, but not necessarily anti-symmetric. Two monoid elements are called \emph{prefix equivalent} if they are prefixes of each other (in other words, they generate the same right ideals $a \cdot A = b \cdot A$), likewise we define suffix equivalence and infix equivalence. Define the $\Hh$-class of a monoid element to be the intersection of its prefix class with its suffix class. We will be particularly interested in $\Hh$-classes which contain an idempotent. The following lemma, see~\cite[Lemmas~1.11--1.14]{bojanczyk_recobook}, sums up the properties of Green's relations that will be used in our proof. 

\begin{lemma}[Green's Relations Lemma]\label{lem:green}
    The following properties hold in every finite monoid:
    \begin{enumerate}
        \item \label{it:eggbox} If $a$ is a prefix of $b$, and $a$ is infix equivalent to $b$, then $b$ is a prefix of $a$; likewise for suffixes.
        \item \label{it:h-classes-same-size}  All $\Hh$-classes contained in the same infix class have the same size.
        \item \label{it:h-class-with-idempotent}  If an $\Hh$-class contains an idempotent, then it is a group.
    \end{enumerate}
\end{lemma}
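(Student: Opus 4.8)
All three items are classical facts about Green's relations in finite monoids, and the plan is to derive them from two standard ingredients: \emph{Green's Lemma} (if $a$ and $c$ are prefix equivalent, with $c = as$ and $a = ct$, then $x \mapsto xs$ and $x \mapsto xt$ are mutually inverse bijections between the suffix class of $a$ and that of $c$, and each preserves prefix classes; symmetrically, for suffix-equivalent elements one uses left multiplication), together with the \emph{stability} of finite monoids, which is essentially item~\ref{it:eggbox} itself. The only genuine use of finiteness is the classical trick of passing to a power $x^n$ that is idempotent.

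\textbf{Item~\ref{it:eggbox}.} First I would prove this directly. Suppose $a$ is a prefix of $b$, say $b = au$, and $a$ is infix equivalent to $b$, so in particular $a$ is an infix of $b$, say $a = xby$. Substituting $b = au$ gives $a = xa(uy)$, and iterating this identity into itself yields $a = x^n a (uy)^n$ for every $n \ge 1$. Choose $n$ large enough that $f := x^n$ is idempotent (possible in a finite monoid). Then $fa = f\bigl(fa(uy)^n\bigr) = f^2 a(uy)^n = fa(uy)^n = a$, so $fa = a$, and therefore $a = fa(uy)^n = a(uy)^n = (au)\cdot y(uy)^{n-1} = b\cdot y(uy)^{n-1} \in bA$, i.e.\ $b$ is a prefix of $a$; the statement for suffixes is symmetric. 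Running the same computation with the pair $(a, ab)$ in place of $(b', a')$ shows more generally that $a$ infix equivalent to $ab$ implies $a$ prefix equivalent to $ab$, and from this one obtains the familiar structural fact that any two infix-equivalent elements $a,b$ of a finite monoid are joined by some element $c$ that is prefix equivalent to $a$ and suffix equivalent to $b$ (take $c = xb$ where $a = xby$, and apply item~\ref{it:eggbox} twice). This fact is used below.

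\textbf{Items~\ref{it:h-classes-same-size} and~\ref{it:h-class-with-idempotent}.} For item~\ref{it:h-classes-same-size}, Green's Lemma says that a right translation between two prefix-equivalent elements restricts to a bijection between the corresponding $\Hh$-classes, and symmetrically for a left translation between suffix-equivalent elements. Given two $\Hh$-classes inside a single infix class, the structural fact from the previous paragraph supplies a third element prefix equivalent to a member of the first and suffix equivalent to a member of the second, so composing one right translation with one left translation yields a size-preserving bijection between the two $\Hh$-classes. For item~\ref{it:h-class-with-idempotent}, let $H$ be the $\Hh$-class of an idempotent $e$. Every $a \in H$ satisfies $a \in eA$ and $a \in Ae$, hence $ea = a = ae$, so $e$ is a two-sided identity on $H$. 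If $a,b \in H$, then from $e = bt$ (as $b$ is prefix equivalent to $e$) we get $ab\cdot t = ae = a$, so $ab$ is prefix equivalent to $a$, and symmetrically $ab$ is suffix equivalent to $b$; hence $ab \in H$, so $H$ is a submonoid with identity $e$. An $\Hh$-class contains at most one idempotent (two idempotents $e,f$ that are prefix and suffix equivalent satisfy $ef = e$ and $ef = f$), so for each $a \in H$ the unique idempotent among its powers $a, a^2, \ldots \in H$ must be $e$; writing $a^m = e$ with $m \ge 1$ gives $a \cdot a^{m-1} = a^{m-1} \cdot a = e$, an inverse of $a$ in $H$. Thus $H$ is a finite monoid in which every element is invertible, i.e.\ a group.

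\textbf{Main obstacle.} The one nontrivial point is item~\ref{it:eggbox}: all the finiteness is concentrated in the step that passes to an idempotent power $x^n$, and the only care required is in the bookkeeping of the iteration, in particular the deduction $fa = a$. Once item~\ref{it:eggbox} and the resulting $\Hh$-class geometry of an infix class are in place, items~\ref{it:h-classes-same-size} and~\ref{it:h-class-with-idempotent} are routine applications of Green's Lemma.
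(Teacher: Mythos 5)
Your proof is correct and is essentially the standard textbook argument: item~\ref{it:eggbox} via iterating $a = xa(uy)$ and passing to an idempotent power of $x$ (the stability of finite monoids), and items~\ref{it:h-classes-same-size} and~\ref{it:h-class-with-idempotent} via Green's Lemma and the location/uniqueness of idempotents in an $\Hh$-class. The paper does not prove the lemma itself but cites the literature for exactly these facts, so there is nothing to contrast; the only (acceptable) shortcut on your side is that Green's Lemma is invoked rather than proved.
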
 

The groups in item~\ref{it:h-class-with-idempotent} can be trivial, i.e.~have only one element.
\paragraph*{The induction.} Having described Green's relations, we resume the proof that every element of $A$ is star-free. 
The proof  is by induction on the position of the element  in the infix ordering. Consider some infix class $J \subseteq A$, and assume that we have already shown star-freeness for every monoid element that is a strict infix of some (equivalently, every) element of $J$. We will show that also all monoid elements in $J$ are star-free. The proof will consider  two cases, which are identified in the following lemma.

\begin{lemma} \label{lem:must-be-h-trivial} Let $\alpha : \pw_k \to A$ be a surjective monoid homomorphism, which refines the reachability homomorphism,  and which is aperiodic modulo reachability.
    Every infix class $J \subseteq A$ satisfies at least one of the following conditions:
    \begin{enumerate}
        \item every element of $J$ is aperiodic; or 
        \item \label{it:at-least-two-bridges} every context in $\alpha^{-1}(J)$ has at least two bridges.
    \end{enumerate}
\end{lemma}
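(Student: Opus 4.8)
The plan is to prove the contrapositive of a disjunction: assuming that $J$ contains a context (via $\alpha^{-1}$) with at most one bridge, we will show every element of $J$ is aperiodic. So fix a context $w \in \alpha^{-1}(J)$ with at most one bridge. By Corollary~\ref{cor:bridges-aperiodic}, $\alpha(w)$ is aperiodic. The goal is then to propagate aperiodicity from this one element $a := \alpha(w)$ to the whole infix class $J$. The natural tool is Green's Relations Lemma~\ref{lem:green}: since $\alpha$ is surjective, every element of $J$ is of the form $\alpha(v)$ for some context $v$, and I would like to relate the $\Hh$-class of $a$ to the $\Hh$-classes of other elements in $J$. The key structural fact I expect to use is item~\ref{it:h-classes-same-size}: all $\Hh$-classes in the infix class $J$ have the same size. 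If I can show the $\Hh$-class of $a$ is trivial (a singleton), then all $\Hh$-classes in $J$ are trivial, which classically implies every element of $J$ is aperiodic (an element $b$ with $b^n$ eventually lying in an $\Hh$-class that is a trivial group must satisfy $b^m = b^{m+1}$ for large $m$, using item~\ref{it:h-class-with-idempotent} applied to the idempotent power of $b$).

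The first step, then, is to reduce to showing that the $\Hh$-class of $a$ is trivial. The issue is that $a = \alpha(w)$ need not lie in an $\Hh$-class containing an idempotent, so I cannot directly talk about "the group" it sits in. To fix this, I would pass to a suitable power: since $a$ is aperiodic, $a^m = a^{m+1}$ for some $m$, so $e := a^m$ is an idempotent lying in $J$ (it is infix-equivalent to $a$, because $a = \alpha(w)$ and $w$ has at most one bridge forces $a$ and all its powers to be infix-equivalent — here I would need to check that taking powers of a context with at most one bridge keeps it with at most one bridge, which follows from the definition of bridge since composition cannot create a new inner component spanning left-to-right if none of the factors already had a bridge; more carefully, $w^n$ has a bridge iff $w$ does). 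So $e$ is an idempotent in $J$, its $\Hh$-class $H_e$ is a group by item~\ref{it:h-class-with-idempotent}, and I claim $H_e$ is trivial: indeed $e$ itself is the identity of the group $H_e$, and $a = \alpha(w)$ lies in $H_e$ (since $e = a^m$ and $a$ is a prefix and suffix of $e$ by aperiodicity, $a^{m+1} = a^m$ gives $a \cdot e = e$ etc.), and since $a^m = e$ with $a$ aperiodic, the cyclic subgroup generated by $a$ inside $H_e$ is trivial — wait, that only shows $a$ is the identity, not that $H_e$ is trivial.

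So the real work is in showing $H_e$ is trivial, and this is where I expect the main obstacle to lie, and where aperiodicity-modulo-reachability (not just the one context $w$) must be used. The plan: suppose $H_e$ is a nontrivial group; pick any $g \in H_e$, $g \ne e$. Since $\alpha$ is surjective, $g = \alpha(u)$ for some context $u$ with $\alpha(u) \in J$, hence (if we are in the case being excluded) we may also assume $u$ has at most one bridge — but actually we are trying to derive a contradiction, so instead I would argue: because $e$ and $g$ are in the same $\Hh$-class of $J$, and $w^m$ maps to $e$, we can find contexts witnessing that $g$ is "reachable" from $e$ by left/right multiplication within $J$; then build a context $w'$ mapping to $g$ that still has at most one bridge (using that $e = \alpha(w^m)$ and the structure of the $\Hh$-class — multiplying inside an $\Hh$-class corresponds to composing with contexts that do not add bridges, since adding a bridge would drop the infix class strictly by the reachability refinement). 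Then $\alpha(w')= g$ is a generator of a nontrivial cyclic group in $H_e$, contradicting Corollary~\ref{cor:bridges-aperiodic} applied to $w'$. The delicate point — and the main obstacle — is justifying that elements of the $\Hh$-class $H_e$ all have preimages with at most one bridge: this should follow because $\alpha$ refines the reachability homomorphism, so the number-of-bridges-type information (more precisely, whether the reachability value is "aperiodic" in the sense of Lemma~\ref{lem:technical-blob}) is determined by $\alpha$, hence constant on $\alpha^{-1}(J)$ once we know it holds for one element; but pinning down exactly which reachability invariant is constant on an infix class, and that "at most one bridge" (or rather its consequence via Lemma~\ref{lem:technical-blob}) is such an invariant, will require care.
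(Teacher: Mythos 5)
Your proposal correctly identifies Corollary~\ref{cor:bridges-aperiodic} as the engine and correctly locates the crux (transferring ``has a preimage with at most one bridge'' across the infix class), but the route you take through Green's relations does not close, for two reasons. First, the final reduction is invalid as pure monoid theory: knowing that every $\Hh$-class \emph{contained in $J$} is trivial does not imply that every element of $J$ is aperiodic, because for $b \in J$ the idempotent power $b^\omega$ may lie in a strictly smaller infix class, whose $\Hh$-class can be a nontrivial group. (Already the cyclic monoid generated by $b$ with $b^4 = b^2$ shows this: the infix class of $b$ is the singleton $\set{b}$, so its unique $\Hh$-class is trivial, yet $b$ is not aperiodic.) The same problem undermines your intermediate step that $e = a^m$ lies in $J$: aperiodicity of $a$ gives no reason for $a^m$ to be infix-equivalent to $a$, and ``$w$ has at most one bridge'' does not repair this. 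The correct way to finish is more direct and skips Green's relations entirely: for each $a \in J$ produce \emph{some} context $w'$ with $\alpha(w') = a$ and at most one bridge, and apply Corollary~\ref{cor:bridges-aperiodic} to $w'$ to get $a^m = a^{m+1}$ outright.

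Second, the step you flag as delicate --- that every element of $J$ has a one-bridge preimage --- is indeed where the real content lies, and the mechanism you propose for it is wrong: the number of bridges is \emph{not} determined by the reachability homomorphism (two contexts can have identical reachability profiles while one has a single bridge and the other has several disjoint ones), so ``at most one bridge'' is not automatically constant on $\alpha^{-1}(J)$ via the refinement assumption. The paper instead isolates the right invariant, namely the \emph{number of persistent ports}: it is determined by $\beta$ (hence by $\alpha$) and can only decrease under composition, so it is constant on $\alpha^{-1}(J)$. It then proves the combinatorial claim that if $v$ is an infix of $w$ \emph{as contexts}, $v$ and $w$ have the same number of persistent ports, and $v$ has at most one bridge, then $w$ has at most one bridge (every inner path of $w$ from a non-persistent left port to a non-persistent right port must cross the image of the unique bridge of $v$). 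Given one context $w_0 \in \alpha^{-1}(J)$ with at most one bridge, surjectivity of $\alpha$ lets you write any $a \in J$ as $\alpha(v_1 w_0 v_2)$, and the claim applies to $w_0$ as an infix of $v_1 w_0 v_2$. You would need to supply both the invariance of the persistent-port count and this infix claim to make your argument work; as written, the proposal has a genuine gap.
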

\begin{proof} To prove the lemma, we will show that if  some context in $\alpha^{-1}(J)$ has at most one bridge, then  every element of $J$ is aperiodic. We begin with the following claim, which shows that the number of persistent ports is an invariant of an infix class.

    \begin{claim}\label{claim:number-of-persistent-ports-invariant}
        All contexts in $\alpha^{-1}(J)$ have the same number of persistent ports.
    \end{claim}
    \begin{proof}
         Since $\alpha$ refines the reachability homomorphism, and the reachability homomorphism stores information about which ports are persistent, it follows that the image under $\alpha$ determines the number of persistent ports.  
        If we compose two contexts, then the context can only have fewer persistent ports. It therefore follows that the invariant ``number of persistent ports'' can only decrease when composing contexts; and therefore this invariant must be constant within a single infix class.
    \end{proof}
    
    The following claim shows that having at most one bridge is also an invariant of the infix class.

    \begin{claim}
        Let $v$ and $w$ be contexts with the same number of persistent ports, such that $v$ is an infix of $w$. If  $v$ has at most one bridge, then so does $w$.
    \end{claim}
    \begin{proof}
        Let $f$ be the function which maps vertices of $v$ to their corresponding vertices in $w$, when $v$ is viewed as an infix of $w$. This function maps inner paths of $v$ to inner paths of $w$. 
        Since $v$ and $w$ have the same number of persistent ports,  the function $f$ does not change persistence:
        \begin{align*}
            \text{$x$ is a persistent vertex in $v$} \qquad \Leftrightarrow \qquad \text{$f(x)$ is a persistent vertex in $w$}.
            \end{align*}
            Using these two observations, we prove the claim. 
        Consider an inner path $\pi$ in $w$ which connects some non-persistent left port with some non-persistent right port. This path must necessarily contain a segment  which is the image, under $f$, of some path $\sigma$ in $v$ that goes from a left port of $v$ to a right port of $w$. Since $f$ does not change persistence, it follows that $\sigma$ must use a vertex from the unique bridge in $v$. Therefore, $\pi$ must use a vertex from the image of this unique bridge under $f$, and this image is contained in a unique component of $w$.  
    \end{proof}

    Since we have assumed that some context in $\alpha^{-1}(J)$ has  at most one bridge, we can use the above claim to  conclude that every context in $\alpha^{-1}(J)$ has at most one bridge. By Corollary~\ref{cor:bridges-aperiodic} we know that all elements of  $J$ are  aperiodic. 
\end{proof}

Using the above lemma, it is enough to show the induction step for an infix class which satisfies one of two cases in the lemma.

\subsubsection{Only aperiodic elements}
\label{sec:no-idempotents}
We begin with the case where every element of $J$ is aperiodic. Here, we use the same proof as \schutz's proof for words. We further split this case into two sub-cases, depending on whether $J$ contains an idempotent. 

\paragraph*{Has no idempotents.} 
Assume first that $J \subseteq A$ is an infix class without any idempotent. (Actually, once we assume that there are no idempotents, then the assumption on having only aperiodic elements will not be used.) In monoid theory,  an infix class without idempotents is called \emph{non-regular}, and such infix classes are known to be simple because they necessarily decompose into strictly smaller infix classes, as explained in the following lemma.

\begin{lemma} If $J$ is an infix class without idempotents, then every context  in $\alpha^{-1}(J)$ can be decomposed as $w_1 w_2 w_3$ such that $w_2$ is a $k$-generator and both $w_1$ and $w_2$ have images under $\alpha$ that are  strict infixes of $J$.
\end{lemma}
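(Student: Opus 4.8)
The plan is to use the standard monoid-theoretic fact that a non-regular infix class $J$ (one without idempotents) cannot contain any power $a^n$ that stays infix-equivalent to $a$; more precisely, for $a \in J$ the sequence $a, a^2, a^3, \ldots$ must eventually leave $J$ and drop to strictly smaller infix classes, since otherwise the finitely many powers would produce an idempotent inside $J$. I would first translate this into a statement about contexts: if $w \in \alpha^{-1}(J)$, then $\alpha(w)$ is not idempotent, and in fact (using aperiodicity, or just finiteness) no nontrivial product involving $w$ can land back in $J$ unless it is essentially $w$ itself. The key point to extract is that any factorization of $\alpha(w)$ as a product of two monoid elements must have at least one factor that is a strict infix of $J$ — because if both factors were infix-equivalent to (hence $\geq$) $\alpha(w)$, then by Green's Relations Lemma item~\ref{it:eggbox} they would both be prefix- and suffix-equivalent to $\alpha(w)$, which combined with the standard argument would force an idempotent into $J$.

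Next I would descend to generators. Since $w \in \pw_k$ has pathwidth at most $k$, by Lemma~\ref{lem:generators} it can be written as a composition $g_1 g_2 \cdots g_N$ of $k$-generators. Now consider the prefixes $\alpha(g_1 \cdots g_i)$ as $i$ ranges over $\set{0,1,\ldots,N}$: this is a sequence in $A$ that starts at the identity and ends at $\alpha(w) \in J$, and each step multiplies on the right by the image of a single generator. By the infix-monotonicity of the construction and the previous paragraph, I would argue there must be an index $i$ where this running product first enters the infix class $J$ — i.e. $\alpha(g_1 \cdots g_{i-1})$ is a strict infix of $J$ but $\alpha(g_1 \cdots g_i)$ is in $J$. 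A symmetric argument from the right gives an index $j$ where the suffix product $\alpha(g_j \cdots g_N)$ is in $J$ but $\alpha(g_{j+1} \cdots g_N)$ is a strict infix of $J$. Setting $w_1 = g_1 \cdots g_{i-1}$, $w_2 = g_i$ (or more carefully, the block of generators strictly between the two ``entry points''), and $w_3 = g_{?} \cdots g_N$, I would check $i \le j$ (so the decomposition makes sense) and that the middle part can be taken to be a single $k$-generator by absorbing the remaining generators into $w_1$ and $w_3$ — here one uses that $\alpha(w_1)$ and $\alpha(w_3)$ being strict infixes of $J$ is preserved under multiplying by more generators as long as we do not cross back into $J$, which is guaranteed by the choice of entry indices.

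The main obstacle I anticipate is making the ``first entry / last entry'' argument airtight, in particular showing that $w_2$ can be shrunk to a single generator rather than a block. The subtlety is that $\alpha(w_1 g_i)$ might already be in $J$ while $\alpha(w_1)$ is strictly below $J$, which is exactly what we want, but we also need $\alpha(g_i \cdots w_3)$ — i.e. everything to the right of $w_1$ — to still be in $J$ (it equals $\alpha(w)$ modulo left-multiplication considerations, so this should be automatic since $\alpha(w) \in J$ and $w_1$-part is a strict prefix-infix). The cleanest route is probably: among all factorizations $w = w_1 w_2 w_3$ with $w_2$ a single generator, $\alpha(w_1)$ a strict infix of $J$, and $\alpha(w_1 w_2)$ in $J$, pick one; existence follows because the leftmost generator $g_1$ satisfies $\alpha(\text{empty}) = 1$ which is either a strict infix of $J$ (done, with $w_2 = g_1$) or $1 \in J$, the latter forcing $J$ to be the group of units containing the idempotent $1$, contradicting that $J$ has no idempotents. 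Then a symmetric choice on the right, together with checking the two chosen generator positions do not overlap — if they did, a short argument shows $\alpha(w)$ itself would be a product of two strict-infix elements, again contradicting non-regularity of $J$ via item~\ref{it:eggbox} of Lemma~\ref{lem:green} — finishes the proof.
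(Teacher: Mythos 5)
Your overall skeleton is the paper's: write $w$ as a product $g_1\cdots g_N$ of $k$-generators using Lemma~\ref{lem:generators}, cut at the first prefix whose image under $\alpha$ enters $J$, and invoke the fact that a non-regular infix class satisfies $J\cdot J\cap J=\emptyset$ (your first paragraph states this correctly, in the form ``in any two-factor factorization of an element of $J$, at least one factor is a strict infix of $J$''; the paper simply cites this from Pin's notes rather than re-deriving it from Lemma~\ref{lem:green}). Your observation that the first-entry index exists because the prefix images start at $1$, which cannot lie in $J$ since $1$ is an idempotent, and end at $\alpha(w)\in J$, is also correct and matches the implicit step in the paper.

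The execution after that point has a genuine error. The ``absorb the middle block into $w_1$ and $w_3$'' step cannot work: once the running prefix product enters $J$ it stays in $J$ (each later prefix image is sandwiched in the infix order between $\alpha(g_1\cdots g_i)\in J$ and $\alpha(w)\in J$), so absorbing even one more generator into $w_1$ destroys the property that $\alpha(w_1)$ is a strict infix of $J$. More seriously, the contradiction you invoke to reconcile the left and right entry points is backwards: ``$\alpha(w)$ is a product of two strict-infix elements'' does \emph{not} contradict non-regularity of $J$ --- on the contrary, the decomposition you are constructing exhibits exactly such a product, namely $\alpha(w)=\alpha(w_1)\cdot\alpha(w_2w_3)$ with $\alpha(w_1)$ a strict infix. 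What is impossible for a non-regular $J$ is a product of two elements \emph{of $J$} landing back in $J$, i.e., precisely your paragraph-one fact. The repair is short and eliminates the two-sided argument entirely, which is what the paper does: with $i$ minimal such that $\alpha(g_1\cdots g_i)\in J$, set $w_1=g_1\cdots g_{i-1}$, $w_2=g_i$ and $w_3=g_{i+1}\cdots g_N$, and apply the paragraph-one fact to the factorization $\alpha(w)=\alpha(w_1w_2)\cdot\alpha(w_3)$: since $\alpha(w_1w_2)\in J$ is not a strict infix, $\alpha(w_3)$ must be one. (Note in passing that the statement of the lemma contains a typo: the factors with strict-infix images are $w_1$ and $w_3$, not $w_1$ and $w_2$; this is what the subsequent application of Lemma~\ref{lem:reuse-bounded-infixes} with $\ell=3$ requires, and it is what both your argument and the paper's actually establish.)
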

\begin{proof}
    Consider a context $w$ with image in $J$, and view it as a word whose letters are $k$-generators.  Let $w_1$ be the longest prefix of this word whose  image is not in $J$ (and therefore this image is a strict infix of $J$), let $w_2$ be the letter just after $w_1$, and let $w_3$ be the rest of the word. If $w_3$ would have image in $J$, then $J \cdot J$ would contain an element of $J$, which cannot happen in an infix class without an idempotent,  see~\cite[Corollary 2.25]{PinMPRI}.
    \end{proof}

The above lemma, together with the induction assumption, implies the sufficient condition in the following lemma, with $\ell =3$,  and thus proves star-freeness of every element in $J$. 

\begin{lemma} \label{lem:reuse-bounded-infixes} The following condition is sufficient for star-freeness of $a \in A$: every monoid element that is a strict infix of $a$ is star-free, and  there is some $\ell \in \set{1,2,\ldots}$ such that every  context in $\alpha^{-1}(a)$ can be decomposed as $w_1 \cdots w_\ell$ such that each $w_i$ is either a  $k$-generator, or $\alpha(w_i)$  is a strict infix of~$a$. 
\end{lemma}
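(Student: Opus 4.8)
The plan is to prove Lemma~\ref{lem:reuse-bounded-infixes} by expressing $\alpha^{-1}(a)$ as a Boolean combination of concatenations of languages that are already known to be star-free. By assumption, every context in $\alpha^{-1}(a)$ factors as $w_1\cdots w_\ell$ where each $w_i$ is either one of the finitely many $k$-generators or has $\alpha$-value a strict infix of $a$. Both kinds of factors lie in star-free languages: a single $k$-generator forms a finite (hence star-free) language, and by the induction hypothesis every $\alpha^{-1}(b)$ with $b$ a strict infix of $a$ is star-free. So a natural first attempt is to write
\begin{align*}
\alpha^{-1}(a) \overset{?}{=} \bigcup \; \alpha^{-1}(b_1)\cdots\alpha^{-1}(b_\ell),
\end{align*}
where the union ranges over all tuples $(b_1,\ldots,b_\ell)\in A^\ell$ with $b_1\cdots b_\ell = a$ and each $b_i$ either the $\alpha$-image of a $k$-generator or a strict infix of $a$.

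The first thing I would check is the inclusion $\subseteq$: this is exactly the content of the decomposition hypothesis, since a factorization $w_1\cdots w_\ell$ as guaranteed puts $w$ into $\alpha^{-1}(\alpha(w_1))\cdots\alpha^{-1}(\alpha(w_\ell))$, and $\alpha(w_1)\cdots\alpha(w_\ell)=\alpha(w)=a$. The second thing is $\supseteq$: if $w\in\alpha^{-1}(b_1)\cdots\alpha^{-1}(b_\ell)$ with $b_1\cdots b_\ell=a$, then $\alpha(w)=b_1\cdots b_\ell=a$, so $w\in\alpha^{-1}(a)$. Hence the displayed equation holds with equality. The right-hand side is a finite union (the monoid $A$ is finite, so there are finitely many tuples) of $\ell$-fold concatenations of star-free languages, and star-free languages of contexts are by definition closed under concatenation and finite union; therefore $\alpha^{-1}(a)$ is star-free.

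The one point that needs care — and the place I expect the only real friction — is making sure the building blocks on the right are genuinely among the allowed star-free languages. For a factor $w_i$ that is a $k$-generator, I should not use $\alpha^{-1}(\alpha(w_i))$ (which might be a large, not-yet-known-to-be-star-free set) but rather the finite language consisting of that single generator, or more robustly the finite set of all $k$-generators intersected appropriately; so the union should really range over decompositions where each position is tagged as ``generator'' (contributing a fixed finite language) or ``strict infix $b_i$'' (contributing $\alpha^{-1}(b_i)$), and one takes the concatenation of the corresponding languages, keeping only those tag-and-value combinations whose product in $A$ equals $a$. Since a single generator $g$ has $\alpha(g)$ equal to either a strict infix of $a$ or (in the degenerate case) infix-equivalent to $a$, the bookkeeping still closes up: in the former case the generator is absorbed into the strict-infix machinery via the induction hypothesis anyway, and in either case the finite-language option is always available and always star-free. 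With this refinement the argument goes through verbatim, and this completes the proof of Lemma~\ref{lem:reuse-bounded-infixes}, hence the induction step in the non-regular case.
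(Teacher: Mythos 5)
Your proposal is correct and matches the paper's proof: the paper likewise writes $\alpha^{-1}(a)$ as a finite union, over tuples $a_1,\ldots,a_\ell$ multiplying to $a$, of concatenations $L_{a_1}\cdots L_{a_\ell}$, where $L_b$ is $\alpha^{-1}(b)$ when $b$ is a strict infix of $a$ and otherwise the finite set of $k$-generators in $\alpha^{-1}(b)$. The ``refinement'' you identify at the end --- using the finite generator language rather than the full preimage for values that are not strict infixes --- is exactly the paper's definition of $L_b$, so the two arguments coincide.
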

\begin{proof}
The contexts in $\alpha^{-1}(a)$ are described by the star-free expression
    \begin{align*}
    \bigcup_{\substack{a_1,\ldots,a_\ell\\ a = a_1 \cdots a_\ell}} L_{a_1} \cdots L_{a_\ell}
    \end{align*}
    where $L_b$ is defined to be the following star-free language:
    \begin{align*}
    L_b = \begin{cases}
        \alpha^{-1}(b) & \text{when $b$ is a strict infix of $b$};\\
        \text{$k$-generators in }\alpha^{-1}(b) & \text{otherwise}.
    \end{cases}
    \end{align*}
\end{proof}

\paragraph*{Has idempotents.}
Consider now an infix class $J$ where all elements are aperiodic, and which contains at least one idempotent. 
The infix class contains at least  one $\Hh$-class which is a group, since it contains an idempotent (see item~\ref{it:h-class-with-idempotent} of Lemma~\ref{lem:green}). This $\Hh$-class must have size one, since a group with only aperiodic elements is necessarily of size one. Finally, since  all $\Hh$-classes in a given infix class have the same size, by item~\ref{it:h-classes-same-size} of Lemma~\ref{lem:green}, it follows that all $\Hh$-classes are trivial.  

We begin with the following lemma, which says that  if we are given a context whose value under $\alpha$ is promised to be in $J$, then we can use star-free expressions to determine that value.  
 \begin{lemma}\label{lem:approx-j} Every $a \in J$ is star-free over $\alpha^{-1}(J)$ in the following sense: there is a star-free expression that coincides with $\alpha^{-1}(a)$ over contexts from $\alpha^{-1}(J)$.
 \end{lemma}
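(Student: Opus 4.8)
The plan is to exploit the two structural facts we have just established about the infix class $J$: (a) the number of persistent ports is constant on $\alpha^{-1}(J)$ (Claim~\ref{claim:number-of-persistent-ports-invariant}), and (b) every $\Hh$-class inside $J$ is trivial. Fact (b) is the key: it means that the prefix class and the suffix class of an element $a\in J$ together pin down $a$ uniquely within $J$. So to identify $\alpha(w)$ for a context $w$ promised to lie in $\alpha^{-1}(J)$, it suffices to identify the prefix class and the suffix class of $\alpha(w)$. I will do the prefix class; the suffix class is symmetric.

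First I would set up the ``running prefix'' of a context. Write $w$ as a word $g_1\cdots g_n$ over the finite alphabet of $k$-generators, and look at the sequence of partial products $\alpha(g_1\cdots g_i)$. As $i$ grows, these products are $\leq$-decreasing in the infix order, so they hit $J$ at some first index and then stay in $J$ forever after (they cannot leave $J$, since $\alpha(w)\in J$ and anything infix-below would contradict that $\alpha(w)$ is infix-equivalent to each later partial product once we are inside $J$ — here one uses Lemma~\ref{lem:green}\eqref{it:eggbox}). Once we are inside $J$, the prefix class of the partial product can only move \emph{up} in the prefix order (a prefix of a prefix is a prefix), and since $\Hh$-classes are trivial and the infix class is fixed, there are only finitely many prefix classes available, so the prefix class stabilizes. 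The stable value is the prefix class of $\alpha(w)$. The point is that this ``first moment of stabilization'' can be detected by a star-free condition: it is a Boolean combination of statements of the form ``the prefix $g_1\cdots g_i$ has $\alpha$-image with prefix class $P$'', and by the induction hypothesis every element strictly below $J$ is star-free, while the $\Hh$-triviality lets us convert ``same prefix class inside $J$'' into a genuine language condition via the finitely many elements of $J$ with that prefix class. Concretely, I expect the expression to be built from: a star-free test that the maximal proper prefix outside $J$ has a given $\alpha$-value (available by induction), a marker $k$-generator, and a star-free test on the remaining suffix that its running prefix class never changes — the last being expressible because ``the prefix class does not strictly increase at step $i$'' is, after peeling off one $k$-generator, again a condition of the form handled by induction together with membership in one of finitely many sublanguages of $\alpha^{-1}(J)$.

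The main obstacle I anticipate is making the last bookkeeping step rigorous without circularity: we are trying to describe a sublanguage of $\alpha^{-1}(J)$ using, among other things, conditions that mention $\alpha^{-1}(b)$ for $b\in J$ itself, so one must be careful that the recursion is well-founded. The resolution is that the only self-reference allowed is ``the prefix class (resp.\ suffix class) is such-and-such'', and these are coarser than $\alpha$ itself on $J$; the actual $\alpha$-value is only reconstructed \emph{at the end} by intersecting a prefix-class language with a suffix-class language, which by $\Hh$-triviality is a singleton class, i.e.\ exactly $\alpha^{-1}(a)$. So the construction should be phrased as: first build, for each prefix class $P$ and suffix class $Q$ inside $J$, a star-free expression that coincides with $\{w\in\alpha^{-1}(J): \alpha(w)\text{ has prefix class }P\}$ and similarly for $Q$, using only strict-infix languages (induction) plus Boolean operations and concatenation; then intersect. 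I would also keep in mind that the number-of-persistent-ports invariant guarantees all contexts in $\alpha^{-1}(J)$ have comparable bridge/port structure, which is what legitimizes treating $\alpha^{-1}(J)$ as a single arena throughout, though for this particular lemma it is really the $\Hh$-triviality and the induction hypothesis that do the work.
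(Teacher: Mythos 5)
Your overall route is the paper's: use $\Hh$-triviality to reduce identifying $\alpha(w)$ within $J$ to identifying its prefix class and its suffix class, detect the prefix class by watching the running product of $k$-generators enter $J$, and use the induction hypothesis on the (strict-infix) factor just before the entry point. The one place where your proposal genuinely diverges --- and where it has a gap --- is the step you yourself flag: the ``star-free test on the remaining suffix that its running prefix class never changes.'' As described, that test is circular: to express ``the running product has prefix class $P$'' for products already inside $J$, you would need the languages $\alpha^{-1}(b)$ for $b \in J$, which are exactly what is being constructed, and declaring the self-reference to be ``coarser'' does not break the cycle.

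The fix is that no test on the suffix is needed at all, by a second use of item~\ref{it:eggbox} of Lemma~\ref{lem:green}: if $\alpha(w) \in J$ and $\alpha(w)$ has \emph{some} prefix lying in $P$, then $\alpha(w)$ itself lies in $P$ (that prefix is infix-equivalent to $\alpha(w)$ and a prefix of it, so prefix-equivalence follows). In particular the running products, once inside $J$, all sit in the prefix class of $\alpha(w)$ --- the class cannot drift, so there is nothing to monitor --- and the first entry into $P$ coincides with the first entry into $J$. The prefix-class language is therefore simply
\begin{align*}
\bigcup_{b,c} \alpha^{-1}(b)\cdot \text{($k$-generators with value $c$)} \cdot \pw_k,
\end{align*}
where $b$ ranges over strict infixes of $J$ (star-free by induction) and $c$ over generator values with $bc \in P$; this coincides with $\alpha^{-1}(P)$ on the promise set $\alpha^{-1}(J)$, and the promise itself need not be expressed. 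With this correction your plan --- intersect the prefix-class and suffix-class languages and invoke $\Hh$-triviality --- goes through exactly as in the paper.
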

 \begin{proof}
    Let  $w \in \alpha^{-1}(J)$. Let $P \subseteq J$ be the prefix class of $a$. By  item~\ref{it:eggbox} of Lemma~\ref{lem:green}, we know that for elements of $J$, belonging to $P$ is the same as having a prefix in $P$.   By viewing $w$ as a word over the alphabet of $k$-generators, and looking at the  minimal prefix of this word whose image under $\alpha$ is  in $P$, we see
    \begin{align*}
\alpha^{-1}(P) \quad = \quad \alpha^{-1}(J) \ \cap\ \left( 
        \bigcup_{b,c}   \alpha^{-1}(b)\cdot \text{($k$-generators with value $c$)} \cdot \pw_k \right),
        \end{align*}
        where the sum in the parentheses ranges over choices of $b,c \in A$ such that $b$ is a strict infix of $J$ and $bc \in P$. In other words, the star-free language in the parentheses coincides with  the  prefix class of $a$ over contexts from $\alpha^{-1}(J)$. 
    Using a symmetric result for suffix classes, we see that there is a star-free language that coincides with the suffix class of $a$ over contexts with image in $J$. Intersecting these two, we get a language that coincides with $a$ over contexts with image in $J$, since $a$ is the unique element of its $\Hh$-class thanks to the assumption that all $\Hh$-classes in $J$ are trivial.
 \end{proof}

The above lemma, together with the following one, implies that all  elements of  $J$ are star-free.
 \begin{lemma}
     The language $\alpha^{-1}(J)$ is star-free.
 \end{lemma}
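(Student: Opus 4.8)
The goal is to show that $\alpha^{-1}(J)$ is star-free, where $J$ is an infix class all of whose elements are aperiodic and which contains an idempotent (hence, as established above, all $\Hh$-classes in $J$ are trivial). The plan is to follow \schutz's classical argument, adapted to contexts. A context $w$ has image in $J$ if and only if (a) $\alpha(w)$ is an infix of some element of $J$, i.e.\ $w$ is built up so that at every prefix/suffix cut the value does not drop below $J$; and (b) the value is exactly in $J$ rather than strictly above. The first step is to express membership in the upper set $\{a : J \text{ is an infix of } a\}$ and its complement; these are inverse images of infix-upward-closed sets, and since every \emph{strict} infix of $J$ is star-free by the induction hypothesis, the language $\alpha^{-1}(\text{strict infixes of }J)$ is star-free, so its complement (within $\pw_k$) is too. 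Thus I may freely intersect with ``the whole context, and also every prefix and every suffix, avoids the strict-infix region''.

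\textbf{Key steps.} First I would reduce to the following: a context $w$ lies in $\alpha^{-1}(J)$ iff $w$, read as a word over the $k$-generators, can be cut as $w = u\,g\,v$ with $g$ a $k$-generator, $\alpha(u)$ and $\alpha(v)$ strict infixes of $J$, and $\alpha(w) \in J$ — \emph{unless} $w$ itself is a $k$-generator with value in $J$ (the short base case, a finite language). Using the minimal-prefix / minimal-suffix trick from the proof of Lemma~\ref{lem:approx-j}, every such $w$ of length $\ge 2$ has a canonical decomposition $w = w_1 g w_2$ where $w_1$ is the shortest prefix whose value is already in $J$ (so $\alpha(w_1)\in J$ but the letter before makes it a strict infix) — wait, more carefully: let $w_1$ be the longest prefix with value a strict infix of $J$, $g$ the next letter, $w_2$ the rest. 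Then $\alpha(w_1 g) \in J$, $w_2$ may still have value in $J$. Iterating on $w_2$ does not terminate, so instead I use idempotency: since $J$ has an idempotent and $\Hh$-classes are trivial, for $b, c \in J$ with $b$ infix-equivalent to $c$ we have strong control over products. Concretely, the set $\alpha^{-1}(J)$ is obtained as
\begin{align*}
\alpha^{-1}(J) \ =\ (\text{finite set of }k\text{-generators with value in }J)\ \cup\ \bigcup_{\substack{b,c,d\in A\\ b,d\text{ strict infixes of }J,\ bcd\,?}} \alpha^{-1}(b)\cdot(\ldots)\cdot\alpha^{-1}(d),
\end{align*}
where the middle factor is a star-free language \emph{over} $\alpha^{-1}(J)$ provided by Lemma~\ref{lem:approx-j}, and the outer conditions pin the value down using the induction hypothesis on strict infixes together with the multiplication table of $A$. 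The point is that $\alpha^{-1}(J) = \alpha^{-1}(\overline{\text{strict infixes}}) \cap \big(\text{boundary condition}\big)$, and the boundary condition says: writing $w$ as a word of generators, the minimal prefix reaching value outside the strict-infix region lands in $J$, and symmetrically for suffixes, and the middle part (which by minimality is a context all of whose generator-prefixes/suffixes stay in $J$, i.e.\ a context in $\alpha^{-1}(J)$) has the right value, which is decidable by Lemma~\ref{lem:approx-j}. Each ingredient is star-free, and star-free languages are closed under the required Boolean operations and concatenations.

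\textbf{Main obstacle.} The delicate point is the same one \schutz faced: expressing ``$w$ lies in $J$'' rather than strictly above $J$, when $J$ is regular. The clean resolution is: $\alpha^{-1}(J)$ is the set of contexts that avoid the open set of strict infixes of $J$ entirely, \emph{minus} those that reach all the way down to some strict infix — but since composition only moves down in the infix order, ``$\alpha(w)$ is a strict infix of $J$'' is equivalent to ``$w$ has some prefix whose value is a strict infix of $J$'', and the latter is star-free by induction. Hence
\[
\alpha^{-1}(J) \ =\ \alpha^{-1}\big(\{a: J \text{ is an infix of }a\}\big)\ \setminus\ \alpha^{-1}\big(\{a: J \text{ is a \emph{strict} infix of }a\}\big),
\]
and the first set is the complement of ``some prefix has strictly-smaller-than-$J$ value OR the value is incomparable-and-below'', i.e.\ is star-free by the induction hypothesis, while the second is star-free directly by the induction hypothesis; their set difference is therefore star-free. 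This reduces the whole lemma to the observation that $\alpha^{-1}(\text{strict infixes of }J)$ is a finite union of $\alpha^{-1}(b)$ over star-free $b$, which is exactly the induction hypothesis. I expect verifying that ``$\alpha(w)$ is a strict infix of $J$'' $\iff$ ``$w$ has a generator-prefix with strictly-smaller value'' — i.e.\ that one need only test prefixes, using that each letter is a $k$-generator and multiplication is monotone down the infix order — to be the one spot requiring care, though it is a routine monoid-theoretic fact (cf.\ the minimal-prefix argument already used in Lemma~\ref{lem:approx-j}).
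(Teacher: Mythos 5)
Your reduction of $\alpha^{-1}(J)$ to the set difference of ``image is an infix of $J$'' and ``image is a strict infix of $J$'' is the same first step as the paper's, and the second of these two languages is indeed a finite union of $\alpha^{-1}(b)$ over the finitely many strict infixes $b$ of $J$, hence star-free by the induction hypothesis. The gap is in the first language (equivalently, in its complement $L$, the contexts whose image is \emph{not} an infix of $J$): you assert that it, too, is ``star-free by the induction hypothesis'', but the induction hypothesis only covers monoid elements strictly \emph{above} $J$ in the infix order, whereas the values occurring in $L$ are exactly those strictly below $J$ or incomparable to it. Nothing in the induction gives you these inverse images, and this is precisely where the real work of the lemma lies. (You also flip the direction of the infix relation between your prose and your displayed identity, which is probably what makes the circularity easy to overlook.)

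The paper closes this gap as follows. Reading a context as a word over $k$-generators and locating the first prefix whose value ceases to be an infix of $J$, one gets $L=\bigcup_{b,c}\alpha^{-1}(b)\cdot(\text{$k$-generators with value }c)\cdot\pw_k$, where $b$ ranges over all infixes of $J$ --- \emph{including the elements of $J$ itself} --- and $bc$ is not an infix of $J$. For $b\in J$ the subexpression $\alpha^{-1}(b)$ cannot be discharged by induction; it is replaced by the approximation $L_b$ supplied by Lemma~\ref{lem:approx-j} together with the induction hypothesis, and one must then check that the over-approximation is harmless: the spurious contexts in $L_b$ have images that are not infixes of $J$, so after composing with anything on the right they land in $L$ anyway, and the modified expression still defines exactly $L$. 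Your ``Key steps'' paragraph gestures at Lemma~\ref{lem:approx-j}, but your final argument drops it; this substitution together with the error-absorption argument is the missing idea, and without it the proof is circular, since it would require $\alpha^{-1}(b)$ for $b\in J$ to be star-free --- which is what the lemma is trying to prove.
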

 \begin{proof}
    Let $L$ be the contexts whose image under $\alpha$ is neither in $J$ nor an infix of $J$.   If we take the complement of $L$, and then we remove the contexts with images that are strict infixes of $J$, which are star-free by induction assumption, then we are left with the language from the statement of the lemma. Therefore, it is enough to show that $L$ is star-free.   Using the same kind of argument as in Lemma~\ref{lem:approx-j}, which searches for a shortest prefix in $L$, we see that $L$ is defined by the expression
    \begin{align}\label{eq:bad-infix-expression}
        \bigcup_{b,c}   \alpha^{-1}(b)\cdot \text{($k$-generators with value $c$)} \cdot \pw_k,
        \end{align}
        where the sum ranges over choices of $b,c \in A$ such that $b$ is an infix (not necessarily strict) of $J$ and $bc$ is not an infix of $J$. The above expression is not yet known to be star-free, since it uses subexpressions of the form $\alpha^{-1}(b)$ for $b \in J$. However, such a subexpression can be replaced by star-free languages without affecting the value of the entire expression, as follows. 
   
        By Lemma~\ref{lem:approx-j} and the induction assumption, for every $b \in J$ there is a star-free language $L_b$ which coincides with $\alpha^{-1}(b)$ on contexts whose image is an infix of $J$. In other words, $L_b$  is equal to $\alpha^{-1}(b)$ plus some extra contexts which are in $L$. In the expression~\eqref{eq:bad-infix-expression}, replace each sub-expression $\alpha^{-1}(b)$ with $b \in J$ by the expression $L_a$. The extra contexts from the new sub-expressions are outside $L$, and therefore after this replacement the expression will still define $L$.  
 \end{proof}

This completes the proof of the induction step for infix classes where all elements are aperiodic.

\subsubsection{At least two bridges}
\label{sec:no-blockades}
We are left with the case when every context with image in $J$ has at least two bridges. This case is unique to graphs, and it will be resolved using the following lemma.   \begin{lemma}\label{lem:two-bridges-decomposition}
    If a context in $\pw_k$ has at least two bridges, then it can be decomposed as $w_1 \cdots w_\ell$ 
    so that  $\ell \le \Oo(k)$  and  each of the contexts $w_1,\ldots,w_\ell$ is either a $k$-generator, or has strictly more persistent ports than $w$.
\end{lemma}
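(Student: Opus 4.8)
The argument splits into a reduction to contexts without persistent ports, followed by a routing construction that uses the two bridges to control the factorization.

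\emph{Step 1: reduction to $P=\emptyset$.} Let $P$ be the set of persistent vertices of $w$ and $p=|P|$. A persistent vertex lies in the first and the last bag of any path decomposition, hence in every bag, so deleting $P$ from $w$ gives a context $\hat w$ of pathwidth at most $k-p$ with no persistent ports. The context $\hat w$ still has at least two bridges: an inner component is held together by paths running through non-port vertices, and since the vertices of $P$ are ports, deleting them neither disconnects such a path nor destroys an incidence with a non-persistent port; thus each bridge of $w$ survives as a bridge of $\hat w$, and distinct bridges stay distinct. Conversely, a factorization $\hat w=\hat w_1\cdots\hat w_\ell$ of the desired shape lifts to $w=w_1\cdots w_\ell$ by re-inserting the $p$ vertices of $P$ into every bag of every factor: a $(k-p)$-generator thereby grows to at most $k+1$ vertices and becomes a $k$-generator, and a factor with a persistent port keeps it and acquires the $p$ vertices of $P$ as additional persistent ports, hence has at least $p+1>p$ of them. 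So we may assume $P=\emptyset$.

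\emph{Step 2: routing.} Fix two distinct bridges and pick, for each, an inner path with non-port interior: $\pi_1$ from a left port $u_1$ to a right port $r_1$, and $\pi_2$ from a left port $u_2$ to a right port $r_2$. These paths are vertex-disjoint except possibly at shared endpoints, and each meets every bag of every path decomposition of $w$, since it joins the leftmost bag to the rightmost. I would now build one specific width-$\le k$ path decomposition of $w$ and cut it. The idea is to process the inner components and isolated vertices of $w$ in a carefully chosen order: handle the interior of $\pi_2$ last, so that $u_2$ --- needed only by $\pi_2$ --- can be kept in the current bag from the very start; make sure $\pi_1$ is fully processed before that last phase, so that from then on $r_1$ is carried forward to the final bag; and, in between, keep every other non-persistent left port in the bag until it is first used, and carry every other non-persistent right port from its last use to the end. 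Because $w$ has pathwidth at most $k$, there is room to carry the required ports without any bag exceeding $k+1$ vertices. Finally cut this decomposition at the $\Oo(k)$ places where a carried port is released or a new one starts being carried (plus the single generator separating the last phase): on each resulting stretch some vertex --- a parked left port in the early part, the carried $r_1$ later, an interior vertex of $\pi_2$ in the last phase --- lies in every bag of the stretch and is therefore a persistent port of that factor, which, since $P=\emptyset$, is strictly more persistent ports than $w$ has; the leftover single generators account for the remaining factors.

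\textbf{The main obstacle.} The hard part is Step 2, and specifically two interlocking points: checking that ``parking'' and ``carrying'' the various non-persistent ports can always be done within width $k$ --- this is exactly where the bound $k$ on pathwidth enters, and it must be combined with a processing order that also covers the degenerate cases where the two bridges share an endpoint --- and then bounding the number of forced cuts by $\Oo(k)$, using that there are only $\Oo(k)$ ports, each of which is parked and released at most once. This is also the step that genuinely needs \emph{two} bridges rather than one: two bridges supply the two independent ``anchors'' $u_2$ and $r_1$ that can be held on opposite sides of the decomposition, whereas with a single bridge there is nothing to park on the far side while it is being laid out --- which is why the one-bridge case is instead handled through aperiodicity (Corollary~\ref{cor:bridges-aperiodic}) rather than by a decomposition.
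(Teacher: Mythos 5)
Your Step 1 (stripping the persistent ports and re-inserting them into every factor afterwards) is a sound normalization, even though the paper's proof does not need it. The problem is Step 2, and it is a genuine gap rather than a missing routine detail. The entire combinatorial content of the lemma is concentrated in the sentence ``Because $w$ has pathwidth at most $k$, there is room to carry the required ports without any bag exceeding $k+1$ vertices.'' That claim is not justified, and it is not true in the naive sense: you start from an arbitrary optimal-width decomposition, then reorder the processing of inner components and additionally insist on parking $u_2$ from the first bag until the last phase and carrying $r_1$ from its last use to the end. Each such constraint can force a vertex into bags that were already full, and a global reordering of when inner components are laid out is exactly the kind of operation that can blow up the width. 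You correctly identify this as ``the main obstacle,'' but the proposal does not overcome it, so what remains is a plan, not a proof.

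The paper closes this gap with a specific tool, a Dealternation Lemma derived from the typical sequences of Bodlaender and Kloks: choosing $X$ to be the non-port vertices of one bridge and $Y$ the rest (there are no edges between them), an optimal-width path decomposition can be rearranged, \emph{without increasing the width}, into $\Oo(k)$ intervals in each of which only one of $X$, $Y$ is active. The two-bridges hypothesis is then used quite differently from your ``two anchors'': after refining so that ports are inactive inside intervals, an interval that contains some non-persistent port contains it in every bag; an interval that contains only persistent ports sits strictly between all non-persistent left ports and all non-persistent right ports, so \emph{both} bridges must cross it, forcing it to meet both $X$ and $Y$ --- and whichever of the two is inactive there contributes a vertex present in all its bags. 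If you want to salvage your routing construction, you would need to either prove a width-preserving rearrangement statement of comparable strength yourself, or cite one; as written, the decomposition you describe is not known to exist at width $k$.
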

Before proving the lemma, we use it to complete the proof of the induction step, thus completing the proof of  Theorem~\ref{thm:aperiodicity}. As we have observed in Claim~\ref{claim:number-of-persistent-ports-invariant}, the number of persistent ports is an invariant of the infix class $J$. Therefore, in the decomposition from Lemma~\ref{lem:two-bridges-decomposition}, each factor $w_i$ is either a $k$-generator, or its image  is a strict infix of $J$. Hence, we can apply Lemma~\ref{lem:reuse-bounded-infixes} to conclude that every element of  $J$ is star-free. 

It remains to prove Lemma~\ref{lem:two-bridges-decomposition}. One of the main ingredients of this proof will be a result based on~\cite{BODLAENDER1996358} and~\cite{bojanczykOptimizingTreeDecompositions2017a}, see Lemma~\ref{lem:typical-sequences} below,  which says that a path decomposition can be modified without affecting its width so that it does not alternate too much between different inner components.  To state this lemma, we need to  introduce some terminology for path decompositions. Recall that a  path decomposition is a sequence of sets of vertices, called bags. This sequence might contain repetitions, i.e.~the same set of vertices might appear  several  times in the sequence. To avoid ambiguity, use the word \emph{bag} to describe  an index in the sequence, and not  the set of vertices that is found at this index; and  hence different bags might contain the same vertices.  Recall that an {interval} in a path decomposition is  a set of bags which forms an interval with respect to the total ordering on bags.  We say that a vertex is \emph{active} in an interval if it is added or removed (or both) in the interval, i.e.~some bags from the interval have this vertex and some bags do not.

\begin{lemma}[Dealternation Lemma]\label{lem:typical-sequences}
    Let $w$ be a context  and let $X \cup Y$ be a partition of its non-port vertices such that there is no edge between $X$ and $Y$. If $w$ has a path decomposition of width $k$, then it also has a path decomposition of width $k$ which can be partitioned into $\Oo(k)$ intervals such that in each interval at most one of the sets $X$ or $Y$ is active. 
\end{lemma}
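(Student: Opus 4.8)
The plan is to prove the Dealternation Lemma by combining a structural result on typical sequences with a pigeonhole-style counting on how often a width-$k$ path decomposition can "switch" between the two sides of the partition. Throughout I would work with a fixed path decomposition $B_1,\ldots,B_N$ of $w$ of width $k$, so each bag has at most $k+1$ vertices, and the leftmost (resp.\ rightmost) bag contains all left (resp.\ right) ports. Call a bag \emph{$X$-touching} if it contains a vertex of $X$ that is active somewhere, and similarly \emph{$Y$-touching}; the goal is to rearrange the decomposition so that the sequence of bags breaks into $\Oo(k)$ blocks, each of which is entirely ``$X$-quiet'' or entirely ``$Y$-quiet''.

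First I would recall from~\cite{BODLAENDER1996358} and~\cite{bojanczykOptimizingTreeDecompositions2017a} the machinery of \emph{typical sequences}: the key fact is that among all path decompositions of width $k$ of a fixed graph, one can always find one whose sequence of bag sizes is a ``typical sequence'', and more importantly whose combinatorial shape is controlled by a bounded amount of data depending only on $k$. Since $X$ and $Y$ have no edge between them, the subgraph induced by $X$ (together with the ports) and the subgraph induced by $Y$ (together with the ports) interact only through the ports and through bag-sharing constraints. The plan is to take path decompositions $P_X$ of $w[X\cup\text{ports}]$ and $P_Y$ of $w[Y\cup\text{ports}]$, each of width $\le k$ and each obtained in typical-sequence-normalized form, and then \emph{interleave} them into a single path decomposition of $w$. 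The interleaving is possible because every edge of $w$ lies entirely on the $X$-side or entirely on the $Y$-side; the only constraint is that the ports, which appear in both $P_X$ and $P_Y$, must occupy compatible intervals, and the normalization lets us align them. The number of alternations between ``currently following $P_X$'' and ``currently following $P_Y$'' in the interleaving is bounded by the length of the typical sequence of bag sizes, which is $\Oo(k)$.

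More concretely, the steps are: (1) split $w$ along the partition and take width-$\le k$ path decompositions of the two halves, each with the left ports in the first bag and the right ports in the last bag; (2) apply the typical-sequences normalization (Bodlaender--Kloks, as packaged in~\cite{bojanczykOptimizingTreeDecompositions2017a}) to each half so that its sequence of bag sizes is typical, hence of ``length'' $\Oo(k)$ after collapsing runs of equal size; (3) merge the two normalized decompositions bag-by-bag, at each step carrying along the vertices of whichever half is currently ``active'' plus the shared ports, switching halves only at the $\Oo(k)$ breakpoints dictated by the typical sequences; (4) verify the result is a genuine path decomposition of $w$ of width $\le k$ (connectivity of each vertex's interval is inherited from the halves and from the port-alignment; every edge is covered since it lived in one half) and that it is cut into $\Oo(k)$ intervals in each of which only one of $X$, $Y$ is active. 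The bound $\Oo(k)$ on the number of intervals comes precisely from the bound on the length of a typical sequence over the alphabet $\{0,1,\ldots,k+1\}$, which is $\Oo(k)$.

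The main obstacle I expect is step (3)--(4): making the interleaving rigorous while keeping the width at exactly $k$ and not $2k$. Naively superimposing the two decompositions would force a bag to hold vertices from both halves simultaneously, doubling the width; the whole point of the typical-sequence normalization is to arrange the two halves so that their ``large'' portions do not overlap in time, so that at any moment only one half contributes many vertices and the other contributes only shared ports. Getting this synchronization right — in particular handling the ports, which are genuinely present on both sides throughout, and ensuring the interval-connectivity condition survives the merge — is the delicate part, and is exactly where one must invoke the detailed structure of typical sequences rather than just their length bound. A secondary subtlety is that the lemma is stated for \emph{contexts}, so one must track that the left ports stay in the first bag and the right ports in the last bag of the merged decomposition; this follows if the two half-decompositions are chosen with that property and the merge starts and ends with bags containing all the shared ports, but it should be checked explicitly.
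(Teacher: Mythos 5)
Your plan correctly identifies the Bodlaender--Kloks typical-sequence machinery as the engine, but as written it leaves the one genuinely hard step open, and that step is not a formality. The gap sits in steps (1) and (3): you propose to take width-$\le k$ path decompositions of the two halves $w[X\cup\text{ports}]$ and $w[Y\cup\text{ports}]$ independently, normalize each, and then interleave with $\Oo(k)$ switches, asserting that the number of alternations is bounded by the length of a typical sequence. But the length bound on typical sequences only says that \emph{if} a width-$k$ interleaving of the two normalized schedules exists, it has few blocks; it does not produce one. Two independently chosen width-$k$ decompositions of the halves need not admit \emph{any} common interleaving of width $k$: the halves interact through the ports, and a right port touched early by the $X$-side is pinned in every later bag, permanently eating into the budget available to the $Y$-side. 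What makes the lemma true is that the \emph{given} width-$k$ decomposition of $w$ restricts to two synchronized schedules whose pointwise sum of open-vertex counts is bounded by $k+1$, and the Bodlaender--Kloks interleaving lemma then yields an interleaving of their typical sequences that is dominated by the original (hence still width $\le k$) and has only $\Oo(k)$ switch points. You flag exactly this as ``the main obstacle'' but do not close it, so the proposal is missing its central argument. A secondary inaccuracy: you treat the ports as present on both sides throughout, which holds only for persistent ports; the add/remove instructions for non-persistent ports are precisely where the two halves constrain each other's schedules.

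For comparison, the paper does not re-derive this interleaving argument. It recasts a path decomposition as a sequence of add/remove instructions, observes that any \emph{separated} permutation (one that only reorders $X$-instructions relative to $Y$-instructions) again yields a valid path decomposition because there are no edges between $X$ and $Y$, and then invokes Lemma 17 of the cited tree-decomposition paper as a black box: that lemma supplies a separated permutation that does not increase the width and groups the instructions into $\Oo(k)$ intervals, each operating only on $X$, only on $Y$, or only on ports. If you want a self-contained proof along your lines, the content you must actually supply is that cited lemma.
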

\begin{proof}
    This lemma is essentially proved in~\cite[Lemma 17]{bojanczykOptimizingTreeDecompositions2017a}, which itself is based on the typical sequences of Bodlaender and Kloks~\cite[p. 365]{BODLAENDER1996358}. The only purpose of this proof is   to introduce sufficient terminology so that we can apply the cited results.
    Fix some context $w$ of pathwidth $k$ for the rest of this proof; all path decompositions  in the proof will be path decompositions of this context.
    A path decomposition of $w$ can be viewed as a sequence of instructions from the set  
    \begin{align*}
    \set{\text{add($x$), remove($x$)} : \text{$x$ is a vertex of $w$}}
    \end{align*}
    such that:  (a) every vertex is added at most once and removed at most once in the sequence, and it cannot be added after  having been
     removed; (b) a vertex is added if and only if it is not a left port; (c) a vertex is removed if and only if it is not a  right port.  In particular, persistent ports are neither added nor removed.  An instruction sequence describes a path decomposition in the natural way. 
     
     To prove the lemma, we will begin with some path decomposition of width $k$, viewed as an instruction sequence in the sense described above, and then we will  modify it to achieve the conclusion of the lemma using a permutation of the instructions that respects the order within each of the sets $X$ and $Y$. Define a permutation of instructions in an instruction sequence to be \emph{separated} if the only instruction pairs whose mutual order is changed by the permutation are pairs where one instruction operates on $X$ and the other instruction operates on $Y$.  It is easy to see that if an instruction sequence is a path decomposition, then applying a separated permutation  yields a sequence of instructions that is also a path decomposition; here it is important that there are no edges connecting $X$ and $Y$. (The width might change, however.) We are now ready to apply~\cite[Lemma 17]{bojanczykOptimizingTreeDecompositions2017a}, which says that for every instruction sequence, one can apply a separated permutation so that in the new instruction sequence: (i) the width of the corresponding path decomposition does not increase; and (ii) the instructions can be grouped into $\Oo(k)$ intervals which operate either only on $X$, or only on $Y$, or only on ports. The resulting instruction sequences correspond to a path decomposition as required in the current lemma.
\end{proof}

Equipped with the Dealternation Lemma, we can  prove Lemma~\ref{lem:two-bridges-decomposition}.

\begin{proof}
    It is not hard to see that the lemma is equivalent to the following statement: (*) every context $w \in \pw_k$ with at least two bridges has a path decomposition of optimal width, whose bags can be partitioned into $\Oo(k)$ intervals so that in every interval there is some vertex that is present in all bags of the interval and which is not a persistent port of $w$.  The intervals in (*) correspond to the contexts $w_1,\ldots,w_\ell$, and the vertices present in all bags correspond to their new persistent ports. It remains to prove (*).

Suppose first that the context  $w$  has  an edge that goes directly from a non-persistent left port $x$ to a non-persistent right port $y$.  Every  path decomposition of $w$ -- including those of optimal width --   must have some   bag that contains both endpoints of this edge; all bags to the left of this bag contain $x$ and all bags to the right of this bag contain $y$. This proves (*), and therefore also the lemma, for contexts which have a bridge that is a single edge. 

Consider now a context that has at least two bridges, but which does not have any bridge that is a single edge as discussed in the previous paragraph. Take one of the bridges, and let $X$ be the non-port vertices that are incident to this bridge. Let $Y$ be the remaining non-port vertices. Since there are no edges connecting $X$ and $Y$, we can apply Lemma~\ref{lem:typical-sequences}, yielding an optimal width path decomposition and a partition of its bags into a family of at most $\Oo(k)$ intervals, call this family $\mathcal I$. By refining this family, we can assume without loss of generality that every port is inactive in every interval from $\mathcal I$. To prove (*), we need to show that for every interval $I \in \mathcal I$,  there is a  vertex that is not persistent port of $w$ and which appears in all bags of~$I$. If $I$ uses some non-persistent port vertex at least once, then this port vertex is present in all bags of $I$ and we are done. The interesting case is when $I$ does not use any port vertices except for the persistent ports.
  
        We know that all non-persistent left ports of $w$ are to the left of the interval $I$, and all non-persistent right ports of $w$ are to the right of the interval. Since one of the bridges requires using a vertex from $X$, and one of the bridges requires using a vertex from $Y$, it follows that the interval must use at least one vertex from $X$ and at least one vertex from $Y$. By  the Dealternation Lemma, either $X$ or $Y$ is inactive in the interval, and the inactive set will contribute a vertex that is present in all bags of the interval, thus proving (*).
\end{proof}

This completes the proof of the induction step in the proof that every element in $A$ is star-free, and thus also the proof of Theorem~\ref{thm:aperiodicity}.

\section{Future work}
We finish the paper with some potential directions for future work.
\begin{enumerate}
    \item {\bf Directed graphs.}  It is not clear how to generalize separator logic to directed graphs.
    \item {\bf Cliquewidth.} The star-free expressions used in this paper are based on operations  designed for treewidth. There are also operations designed for cliquewidth, and it is natural to ask about a variant of first-order logic that is equivalent to star-free expressions for these operations.
    \item {\bf Bounded treewidth.} It would be nice to generalize Theorem~\ref{thm:aperiodicity} from pathwidth to treewidth.  As mentioned previously, this would be hard even  for trees.
    \item {\bf Other algebraic characterizations for bounded pathwidth.} Over bounded pathwidth, one could attempt algebraic characterizations of other logics. One natural candidate is the usual variant of first-order logic with the edge relation only; in fact an algebraic characterization of this logic could even be attempted for bounded treewidth, as there corresponding logic for trees is already understood~\cite[Theorem 1]{benediktSegoufin2009}. Another natural candidate is  the  extension of first-order logic with predicates for disjoint paths from~\cite[Section 4]{schrader2021firstorder}.
\end{enumerate}
\bibliographystyle{plain}
\bibliography{bib}
\end{document}